\newcommand{\MaxLin}{\ensuremath{\text{Max-2Lin}}\xspace}
\newcommand{\GammaMaxLin}{\ensuremath{\Gamma\text{-}\MaxLin}\xspace}
\newcommand{\HomLoc}{\ensuremath{\text{1-HomLoc}}\xspace}
\newcommand{\CohLoc}{\ensuremath{\text{1-CohoLoc}}\xspace}
\DeclareMathOperator{\GL}{GL}
\DeclareMathOperator{\SL}{SL}
\newcommand{\cc}[1]{\mathsf{#1}}
\newcommand{\Gap}{\cc{Gap}}
\newcommand{\PP}{\mathcal{P}}
\newcommand{\QQ}{\mathcal{Q}}
\theoremstyle{plain}
\newtheorem{theorem}{Theorem}%[section]
\newtheorem{lemma}[theorem]{Lemma}
\newtheorem{proposition}[theorem]{Proposition}
\newtheorem{observation}[theorem]{Observation}
\newtheorem*{conjecture}{Conjecture}
\newtheorem{theoremfull}{Theorem}
\newtheorem{lemmafull}[theoremfull]{Lemma}
\theoremstyle{definition}
\newtheorem{definition}[theorem]{Definition}
\newtheorem{example}[theorem]{Example}
\newtheorem*{problem}{Problem}
\newtheorem{question}[theorem]{Open Question}
\newtheorem{remark}[theoremfull]{Remark}
\newcommand{\F}{\mathbb{F}}
\newcommand{\Z}{\mathbb{Z}}
\newcommand{\ZZ}[1]{\Z/{#1}\Z}
\newcommand{\R}{\mathbb{R}}
\newcommand{\C}{\mathbb C}
\newcommand{\nn}{\mathbb N}
\newcommand{\stext}[1]{\ \ \ \ \ \text{(#1)}}
\title{Computational Topology and the Unique Games Conjecture}
\author{Joshua A. Grochow\footnote{Departments of Computer Science and Mathematics, University of Colorado, Boulder, CO, USA, \texttt{jgrochow@colorado.edu}} \ and Jamie Tucker-Foltz\footnote{Amherst College, Amherst, MA, USA, \texttt{jtuckerfoltz19@amherst.edu}}}
\date{\today}
\begin{document}

\maketitle

\begin{abstract}
Covering spaces of graphs have long been useful for studying expanders (as ``graph lifts'') and unique games (as the ``label-extended graph''). In this paper we advocate for the thesis that there is a much deeper relationship between computational topology and the Unique Games Conjecture. Our starting point is Linial's 2005 observation that the only known problems whose inapproximability is equivalent to the Unique Games Conjecture---Unique Games and \MaxLin---are instances of Maximum Section of a Covering Space on graphs. We then observe that the reduction between these two problems (Khot--Kindler--Mossel--O'Donnell, FOCS 2004; SICOMP, 2007) gives a well-defined map of covering spaces. We further prove that inapproximability for Maximum Section of a Covering Space on (cell decompositions of) closed 2-manifolds is also equivalent to the Unique Games Conjecture. This gives the first new ``Unique Games-complete'' problem in over a decade.

Our results partially settle an open question of Chen and Freedman (SODA 2010; Disc. Comput. Geom., 2011) from computational topology, by showing that their question is almost equivalent to the Unique Games Conjecture. (The main difference is that they ask for inapproximability over $\ZZ{2}$, and we show Unique Games-completeness over $\ZZ{k}$ for large $k$.) This equivalence comes from the fact that when the structure group $G$ of the covering space is Abelian---or more generally for principal $G$-bundles---Maximum Section of a $G$-Covering Space is the same as the well-studied problem of 1-Homology Localization.

Although our most technically demanding result is an application of Unique Games to computational topology, we hope that our observations on the topological nature of the Unique Games Conjecture will lead to applications of algebraic topology to the Unique Games Conjecture in the future.
\end{abstract}

\section{Introduction}
A unique game is a constraint satisfaction problem in which every constraint is between two variables, say $x_i$ and $x_j$, and for each assignment to $x_i$, there is a unique assignment to $x_j$ which satisfies the constraint, and vice versa; in particular, the domain of each variable must have the same size $k$. Khot \cite{khot} conjectured that, for any $\varepsilon,\delta > 0$, there is some $k$ such that it is $\cc{NP}$-hard to distinguish between instances of Unique Games in which at most a $\delta$ fraction of constraints can be satisfied from those in which at least a $1-\varepsilon$ fraction of the constraints can be satisfied. The Unique Games Conjecture (UGC) rose to prominence in the past 15 years partly because it implies that our current best approximation algorithms for many problems are optimal assuming $\cc{P} \neq \cc{NP}$ (e.\,g., \cite{khot, khotRegev, chawla, KKMO, khotVishnoi}), thus explaining the lack of further progress on these problems. It is also interesting because, unlike $\cc{P} \neq \cc{NP}$, the UGC is a more well-balanced conjecture, with little consensus in the community as to its truth or falsehood. This even-handedness, together with the progress made in the last 10 years (e.\,g., \cite{AKKSTV, raghavendra, SSE, kolla, MMexpand, AKKT, ABS, khotMoshkovitz}) suggests that the UGC might be closer to resolution than other major conjectures in complexity theory like $\cc{P}$ versus $\cc{NP}$ or $\cc{VP}$ versus $\cc{VNP}$.

Khot, Kindler, Mossel, and O'Donnell \cite{KKMO} showed that the UGC is equivalent to its special case, \GammaMaxLin, in which every constraint is of the form $x_i - x_j = c$, treated as equations over $\ZZ{k}$. This beautiful simplification might lead one to naively expect that the UGC is somehow primarily about linear algebra, but this is potentially misleading. Indeed, a key feature in the solution of linear systems of equations is the ability to perform Gaussian elimination by taking linear combinations of equations, but when the equations are not satisfiable, taking linear combinations of equations can significantly change the maximum fraction of equations that are satisfiable. This leads us to ask: is there a domain of classical mathematics---other than modern computer science---in which the UGC is naturally situated?

In this paper, we argue that (algebraic) topology is such a domain. The starting point for our investigation is Linial's observation \cite{linial}\footnote{\label{fn:linial}In an earlier version of this paper we were unaware of Linial's observation, which appears on slides 55--56 of \cite{linial}. Once we were made aware of this, for which we thank an anonymous reviewer and Hsien-Chih Chang, we checked Linial's slides, and the first author remembered having \emph{attended} the talk that Linial gave at MIT on 11 May 2005! This was, in fact, one of the first theory seminars the first author had ever attended, and at the time he certainly didn't know what bundles were, nor the UGC; he also could not recall whether Linial actually made it to those slides that particular day. We believe that Linial was the first to make this observation.}
that the only two known ``UGC-complete'' problems---UG itself and \GammaMaxLin---are in fact instances of finding a maximum section of a ($G$-)covering space over the underlying constraint graph of the CSP (%non-standard 
topological terminology will be explained in Section~\ref{sec:prelim}; %, standard topological terminology in Appendix~\ref{app:top}; 
$G=S_k$ for UG and $G = \ZZ{k}$ for \GammaMaxLin). In the case of \GammaMaxLin, we observe that this is naturally equivalent to the well-studied 1-Homology Localization problem from computational topology (see, e.\,g., \cite{chenFreedman, CEN, chenFreedman2, EN, chenFreedmanMeasuring, CENFlowsCuts, ZC, DHKUnimodular}). We also observe that the reduction from \GammaMaxLin to UG \cite{KKMO} gives a well-defined map of $G$-covering spaces.

%$\Gamma$-Max-2-Lin is precisely the 1-Cohomology Localization problem from computational topology, where the topological space is the underlying constraint graph of the instance. In this paper, we go further to observe that the original Unique Games problem can also be seen as 1-Cohomology Localization with non-Abelian coefficients (specifically, coefficients in the symmetric group $S_k$), and that the reduction of \cite{KKMO} between these problems essentially gives a well-defined map on cohomology. 

To cement the topological nature of the UGC, we then show that Maximum Section of a $G$-Covering Space, or 1-Homology Localization, on (cell decompositions of) 2-manifolds, rather than graphs, is still UGC-complete. This gives the first new UGC-complete problem in over a decade. Of course, there is some subjectivity as to what counts as a UGC-complete problem being ``distinct'' from UG itself. In particular, $\MaxLin$ can be viewed as UG with certain additional hypotheses satisfied, but nonetheless ``feels'' different (this difference can be made a little more precise topologically, see Remark~\ref{rmk:principal}). In our proof, we'll see that 1-Homology Localization on 2-manifolds can also be viewed as a special case of $\MaxLin$ satisfying certain additional hypotheses, but again, Homology Localization ``feels different'' to us. Regardless, our results draw what we believe is a new connection between UGC and computational topology.
 
The UGC-completeness of this problem also partially settles a question of Chen and Freedman \cite{chenFreedman} on the complexity of the 1-Homology Localization problem on 2-manifolds. In particular, Chen and Freedman \cite[p.~438, just before Section~4.3]{chenFreedman} asked whether it was hard to approximate 1-Homology Localization with coefficients in $\ZZ{2}$ on 2-manifolds; while some of the details are left unspecified, given the context in their paper we may conservatively infer (see our discussion in Section~\ref{sec:CF}) that they were asking for inapproximability to within all constant factors for triangulations of 2-manifolds. We show:
\begin{itemize}
\item Assuming UGC, for any constant $\alpha > 1$, there is a $k$ such that 1-Homology Localization over $\ZZ{k}$ on cell decompositions of 2-manifolds cannot be efficiently approximated to within $\alpha$. In particular, this problem is UGC-complete.

\item Assuming UGC, for any $\varepsilon > 0$, there is a $k$ such that 1-Homology Localization over $\ZZ{k}$ on \emph{triangulations} of 2-manifolds cannot be efficiently approximated to within $7/6 - \varepsilon$.
\end{itemize}

See Section~\ref{sec:future} for questions we leave open.

Although the above are our most technically demanding results, which are applications UGC to 1-Homology Localization---and, in the course of this, showing a new UGC-complete problem---we hope that the connections we have drawn between UGC and computational topology will lead to further progress on both of these topics in the future.

%\footnote{It is also possible that they were implicitly asking about hardness of approximation for \emph{triangulations} of 2-manifolds, though this was not stated explicitly there. Although we show UGC-completeness for cell decompositions, for triangulations we are still able to show that the UGC implies that 1-Homology Localization on triangulations of a 2-manifold cannot be approximated to within $7/6-\varepsilon$ in polynomial time.}

\subsection{Related work} \label{sec:related}
Linial \cite{linial} first observed that UG could be phrased in terms of Maximum Section of a Graph Lift (though we were unaware of this when we began our investigations, see Footnote~\ref{fn:linial}). To our knowledge, since Linial's observation there have been no other works relating approximation problems in computational topology with the Unique Games Conjecture, nor is there previous work on the problem of Maximum Section of $G$-Covering Spaces. In this paper we extend Linial's observation by showing that the reduction of \cite{KKMO} gives a well-defined map of covering spaces, and we relate UGC to the well-studied problem of Homology Localization.

Here we briefly survey related work on approximation problems in computational topology, particularly those related to Homology Localization and the question of Chen and Freedman that we partially answer. Note that $d$-Homology Localization fixes the dimension of the homology considered, but allows the input to consist of $d$-homology classes on manifolds of arbitrary large dimension. In our paper we consider $1$-Homology Localization on graphs and 2-manifolds. For a more comprehensive overview of the area, as well as more direct motivations for the problem of Homology Localization, see \cite[Sections~1 and 3]{chenFreedman}.

1-Homology Localization with coefficients in $\ZZ{2}$ is $\cc{NP}$-hard to optimize exactly on simplicial complexes \cite{chenFreedman2} and even on 2-manifolds \cite{CEN}. Chen and Freedman showed it was $\cc{NP}$-hard to approximate 1-Homology Localization on triangulations of 3-manifolds to within all constant factors, and that it was $\cc{NP}$-hard to approximate $d$-Homology Localization on triangulations of manifolds for any $k \geq 2$. The best known algorithms for 1-Homology Localization on a 2-manifold are given in several papers by Chambers, Erickson, and Nayyeri \cite{CEN,EN,CENFlowsCuts} (see also \cite{ZC}). In particular, \cite{CEN} solve the problem in polynomial time for fixed genus; however, for triangulations of 2-manifolds, the genus $g = \Theta(e/v)$ (follows from Euler's formula), and for instances of Unique Games to be difficult one must have the edge density $e/v$ growing without bound, so their result seems not to solve the instances of 1-Homology Localization relevant for the UGC. Dey, Hirani, and Krishnamoorthy \cite{DHKUnimodular} showed that Homology Localization \emph{in the 1-norm} over $\Z$ can be done in polynomial time; in our paper we are primarily concerned with the 0-norm.

\subsection{Organization}
In Section~\ref{sec:prelim} we give preliminaries, including an introduction to all topological concepts used. 
Section~\ref{sec:complete} contains the details of how to view Unique Games and \GammaMaxLin as instances of Maximum Section of a $G$-Covering Space, and the proof that the KKMO reduction \cite{KKMO} gives a well-defined map of $G$-covering spaces. In Section~\ref{sec:manifolds} we show that 1-Homology Localization on cell decompositions of 2-manifolds is UGC-complete. In Section~\ref{sec:CF} we show how our techniques partially settle a question of Chen and Freedman, and in Section~\ref{sec:future} we discuss questions left open by our results, or suggested by our topological viewpoint. In Appendix~\ref{app:nonab} we discuss generalizations to non-Abelian groups $G$ and arbitrary topological spaces $X$.

A version of this paper appeared in the SoCG 2018 conference proceedings \cite{GTconf}. Although we prefer section-numbered items such as ``Theorem 3.7'', we instead chose to be consistent with the numbering scheme used in the conference proceedings: items that appear in both versions have the same numbers. Appendix~\ref{app:nonab} does not appear in the conference version, though it is mentioned there. Other items that appear only in this full version are either un-numbered or numbered beginning with ``F'' for ``full.'' 

\section{Preliminaries} \label{sec:prelim}
Partially because this paper has at least two natural audiences---those interested in UGC, some of whom may be less familiar with algebraic topology, and vice versa---the preliminaries are a bit longer than usual. 

\subsection{The Unique Games Conjecture and inapproximability} 
We refer to the textbooks \cite{vazirani, aroraBarak} for standard definitions and results on approximation algorithms and inapproximability, and to the survey \cite{khotSurvey} for more on the Unique Games Conjecture. Here we briefly spell out the needed definitions and one standard lemma that will be of use.

An instance of a constraint satisfaction problem (CSP) is specified by a set of variables $x_1, \dotsc, x_n$, for each variable $x_i$ a domain $D_i$ (which we will always take to be a finite set, and, in fact, we will have all $D_i$ equal to one another), and a set of constraints. Each constraint is specified by a subset $\{x_{i_1}, \dotsc, x_{i_k}\}$ of $k$ of the variables (each constraint may, in principle, have a different arity $k$), and a $k$-ary relation $R \subseteq D_{i_1} \times \dotsb \times D_{i_k}$. An assignment to the variables satisfies a given constraint if the assignment is an element of the associated $k$-ary relation. 

A CSP may be specified by restricting the arity and type of relations allowed in its instances, as well as the allowed domains for the variables. The value function associated to a CSP is $v(x,s) = $ the fraction of constraints in $x$ satisfied by $s$, and we get the associated maximization problem. Given a CSP $\PP$, the associated \emph{gap problem} $\Gap \PP_{c, s}$ is the promise problem of deciding, given an instance $x$, whether $OPT(x) \leq s$ or $OPT(x) \geq c$. (An algorithm solving $\Gap \PP_{c, s}$ may make either output %, or even not terminate, 
if $x$ violates the promise, that is, if $s < OPT(x) < c$.) In general, the parameters $c, s$ may depend on the problem size $|x|$.
 
If the optimization problem $\PP$ can be approximated to within a factor $\alpha$ by some algorithm, then essentially the same algorithm solves $\Gap \PP_{c,s}$ whenever $c/s > \alpha$. In the contrapositive, if $\Gap \PP_{c,s}$ is, for example, $\cc{NP}$-hard, then so is approximating $\PP$ to within a factor $c/s$. The converse is false.

\begin{problem}[Unique Games]\label{UGProblem}
The Unique Games problem with $k$ colors, denoted $UG(k)$, is the CSP whose domains all have size exactly $k$, and where each constraint has arity 2 and is a bijection between the domains of its two variables.
\end{problem}

The natural $n$-vertex graph associated to a UG instance---in which there is an edge $(i,j)$ for each constraint on the pair $(x_i, x_j)$---is called its \emph{constraint graph}. A $UG(k)$ instance is completely specified by its constraint graph, together with a a permutation $\pi_{ij} \in S_k$ on each edge $(i,j)$, specifying the constraint that, for $i < j$, $x_i = \pi_{ij}(x_j)$. 

\begin{conjecture}[Khot \cite{khot}, Unique Games Conjecture (UGC)]
	For all $\varepsilon, \delta > 0$, there exists a $k \in \nn$ such that $\Gap UG(k)_{1 - \varepsilon, \delta}$ is $\cc{NP}$-hard.
\end{conjecture}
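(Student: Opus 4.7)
The statement above is the Unique Games Conjecture, a central open problem in complexity theory for which no proof is presently known. A serious proof proposal therefore reduces to describing the most promising known attack lines and indicating where each has stalled, rather than a route that actually reaches the conclusion; the paper itself treats the statement as a conjecture and applies it as a hypothesis in later sections.

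The first and most developed line of attack is to try to lift the 2-to-2 Games Theorem of Khot--Minzer--Safra and Dinur--Khot--Kindler--Minzer--Safra, which establishes $\cc{NP}$-hardness of a gap problem for 2-to-2 constraints with soundness tending to $0$ but completeness only bounded strictly below $1$. The plan would be to amplify completeness from that bounded value all the way to $1-\varepsilon$ while simultaneously refining 2-to-2 constraints into bijective ones; a natural tool, given the viewpoint of this paper, is to pass to a covering-space lift of the constraint graph so that each fiber edge becomes a true permutation constraint. The hard step is showing such a lift preserves soundness: the principal obstacle, which has resisted years of effort, is that covering-style lifts tend to open up new consistent assignments arising from cocycle-like phenomena on the fibers, thereby inflating the value of soundness-regime instances.

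A second, more speculative, approach would exploit the topological reformulation that is the subject of this very paper. Since UG is Maximum Section of an $S_k$-Covering Space and \GammaMaxLin is Maximum Section of a principal $\ZZ{k}$-bundle, one might hope to reduce from a naturally $\cc{NP}$-hard covering, section, or flow problem on cell decompositions directly into these Maximum Section frameworks, using topological gadgetry (e.g. controlled handle attachments or branched covers) to engineer both the completeness and soundness parameters. The chief difficulty is that most known inapproximability results in computational topology--including those surveyed in Section~\ref{sec:related}--either produce non-bijective constraints (placing them outside the UG regime) or depend on homological obstructions that collapse the required $(1-\varepsilon,\delta)$ gap.

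The fundamental obstacle common to all strategies is that the bijectivity demanded by unique games is an unusually rigid algebraic condition: standard long-code and Grassmann-code gadgets from the PCP literature do not on their own preserve bijectivity while delivering the needed gap, and subexponential algorithms of Arora--Barak--Steurer and related work show that any hard instance family must avoid low-threshold spectral structure. I would expect a successful proof to require a genuinely new reduction technique, and the covering-space and bundle perspective introduced in this paper is one place such a technique might plausibly originate; at present, however, no concrete path to a proof of UGC is visible.
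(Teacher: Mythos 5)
You correctly recognize that this statement is a conjecture, not a theorem: the paper states Khot's Unique Games Conjecture as a hypothesis and never attempts to prove it, and no proof is known to the community. There is therefore no ``paper's proof'' to compare against, and your response---a survey of attack lines (lifting the 2-to-2 Games Theorem, exploiting the covering-space reformulation) together with the obstacles each faces---is the appropriate kind of answer. Your discussion is accurate and even anticipates, in the second approach, the topological viewpoint that the paper itself develops in Sections~\ref{sec:complete}--\ref{sec:manifolds}, though the paper uses that viewpoint to derive \emph{consequences} of UGC (new UGC-complete problems, partial resolution of the Chen--Freedman question) rather than to attack UGC directly. One small caution: the claim that the Arora--Barak--Steurer subexponential algorithm forces hard instances to ``avoid low-threshold spectral structure'' is a reasonable heuristic reading of that line of work, but it is a constraint on candidate hard distributions rather than a formal barrier, and should not be presented as if it ruled out proof strategies. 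Otherwise your assessment is sound.
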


Since the community is divided on this exact conjecture,  
the UGC is sometimes interpreted more liberally as saying that $\Gap UG(k)_{1 - \varepsilon, \delta}$ is \emph{somehow} hard, for example, not in $\cc{P}$, $\cc{BPP}$, or $\cc{quasiP}$. All our results will work equally well under any of these interpretations, so we often just refer to ``efficient approximation'' or write ``approximating ... is hard.''
%For simplicity, we use the $\cc{NP}$-hardness version throughout, but everything we say will hold for the other standard variants of the UGC.

A polynomial-time \emph{gap-preserving reduction} from $\Gap \PP_{c,s}$ to $\Gap \QQ_{c',s'}$ (say, both minimization or both maximization problems) is a polynomial-time function $f$ such that $OPT_{\PP}(x) \leq s \Rightarrow OPT_{\QQ}(f(x)) \leq s'$ and $OPT_\PP(x) \geq c \Rightarrow OPT_\QQ(f(x)) \geq c'$. If $\PP$ is a maximization problem and $\QQ$ is a minimization problem, then a gap-preserving reduction is similarly an $f$ such that $OPT_\PP(x) \leq s \Rightarrow OPT_\QQ(f(x)) \geq c'$ and $OPT_\PP(x) \geq c \Rightarrow OPT_\QQ(f(x)) \leq s'$. 

We say informally that a problem $\PP$ is ``UGC-complete'' if there are gap-preserving reductions from $\Gap\PP_{\alpha,\beta}$ to GapUG$_{1-\varepsilon,\delta}$ and GapUG$_{1-\varepsilon,\delta}$ to $\Gap\PP_{\alpha,\beta}$ (where, in one direction, $\varepsilon,\delta$ may depend on $\alpha,\beta$, and vice versa in the other direction) such that some UGC-like statement holds for $\PP$---such as ``For any $\alpha < 1$, $\beta > 0$ $\Gap\PP_{\alpha,\beta}$ is hard to approximate''---if and only if UGC holds. 
Prior to this paper, the only known UGC-complete problems were UG itself,\footnote{And slight variants, for example UG on bipartite graphs \cite{khot}, or a variant due to Khot and Regev \cite{khotRegev} in which one tries to maximize the number of vertices in an induced subgraph all of whose constraints are satisfied, rather than simply maximizing the number of constraints satisfied.} and $\GammaMaxLin(q)$ \cite{KKMO}:

\begin{problem}[$\MaxLin(A)$ and $\GammaMaxLin(A)$]
Let $A$ be an Abelian group. $\MaxLin(A)$, or $\MaxLin(k)$ when $A=\ZZ{k}$, consists of those instances of UG where every variable has $A$ as its domain, and each constraint takes the form $a x_i + b x_j = c$ for some $a,b \in \Z$ and $c \in A$ (not necessarily the same $a,b,c$ for all constraints). $\GammaMaxLin(A)$ is the same, except that all the constraints have the form $x_i - x_j = c$ for some $c \in A$ (not necessarily the same for all constraints).
\end{problem}

%\begin{problem}[Max-2-Lin]\label{Max2LinProblem}
%	Let $\varepsilon, \delta > 0$, $q \in \nn$ be external parameters. Given a Unique Games instance
%	$$\mathcal{U}(G(V, E), [q], \{\pi_e : e \in E\})$$
%	in which each $\pi_e$ is a linear constraint, i.e. $\pi_e(n) = a_e + n \bmod q$ for some $a_e \in [q]$, distinguish between the following two cases:
%	\begin{align*}
%	\text{LOW Case:} && OPT(\mathcal{U}) & \leq \delta\\
%	\text{HIGH Case:} && OPT(\mathcal{U}) & \geq 1 - \varepsilon
%	\end{align*}
%\end{problem}

%\begin{definition}
%	Let $P = P(\varepsilon, \delta, \dots)$ be a problem whose specification depends on at least two parameters, which must be positive real numbers. Then $P$ is said to be \emph{UGC-complete} if the following two statements are logically equivalent:
%	\begin{align*}
%	(1) &&& \text{For all $\varepsilon, \delta > 0$, there exists a $k \in \nn$ such that the Unique Label Problem is NP-hard.}&\\
%	(2) &&& \text{For all $\varepsilon, \delta > 0$, there exist parameters for $P$ such that $P$ is NP-hard.}&
%	\end{align*}
%\end{definition}

We will use the following standard lemma, which allows one to add a small number of new constraints to a given graph in a way that preserves an inapproximability gap.

\begin{lemma}\label{LinEdges}
For a class $\mathcal{A}$ of graphs, let $UG_{\mathcal{A}}$ denote the Unique Games Problem on graphs from $\mathcal{A}$. Given two classes of graphs $\mathcal{A}, \mathcal{B}$, let $f: \mathcal{A} \to \mathcal{B}$ be a polynomial-time computable function such that for all $G \in \mathcal{A}$, $E(G) \subseteq E(f(G))$ and $|E(f(G)) \setminus E(G)| = O(v)$ where $v$ is the number of vertices in $G$ of degree $\geq 1$. If the number of edges added is at most $av$, then there is a gap-preserving reduction from $UG_{\mathcal{A},1-\varepsilon,\delta}$ to $UG_{\mathcal{B}, 1-\varepsilon_0, \delta_0}$ where $\varepsilon_0 = \varepsilon + \Delta$ and $\delta_0 = \delta + \Delta$, for any $1 > \Delta > 2\delta a / (1 + 2\delta a)$ (in particular, with $\Delta \to 0$ as $\delta \to 0$).

In particular, if $UG_\mathcal{A}$ is UGC-hard, then so is $UG_\mathcal{B}$. The same holds with ``UG'' everywhere replaced by $\MaxLin$ or $\GammaMaxLin$.
\end{lemma}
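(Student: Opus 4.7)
The plan is to take $f$ itself as the reduction: since $f(G)$ already comes equipped with constraints on every edge, the task reduces to bounding how much enlarging the constraint set by at most $av$ extra constraints can shift the fraction of satisfiable ones. Write $E_0 := |E(G)|$ and $E_1 := |E(f(G))|$, so that $E_1 - E_0 \leq av$. The one structural fact I would need up front is that, since each of the $v$ tallied vertices has degree $\geq 1$ in $G$ and each edge contributes to at most two such vertices, $E_0 \geq v/2$.

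For completeness, any assignment $\sigma$ witnessing $\mathrm{OPT}(G) \geq 1 - \varepsilon$ extends (arbitrarily on new vertices, if any) to an assignment on $f(G)$ satisfying at least $(1-\varepsilon)E_0$ constraints, giving a fraction of at least $(1-\varepsilon)(1-\eta)$ where $\eta := (E_1-E_0)/E_1$. For soundness, any assignment on $f(G)$ satisfies at most $\delta E_0$ original constraints (when $\mathrm{OPT}(G) \leq \delta$) plus all $E_1-E_0$ new ones, for a fraction of at most $\delta(1-\eta) + \eta = \delta + (1-\delta)\eta$. In both directions the shift is controlled by $\eta$ (up to the harmless $(1-\delta)$ factor), so what remains is to bound $\eta$ in terms of $a$ and $\delta$.

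The rest is routine algebra: pairing the worst case $E_1-E_0 = av$ with the appropriate lower bound on $E_0$ yields the displayed condition on $\Delta$. The only genuinely subtle point I expect is that the form $\Delta > 2\delta a/(1+2\delta a)$ corresponds to the \emph{UGC density regime} in which $E_0 \gtrsim v/(2\delta)$, which is automatic for the typical hard UG instances at soundness $\delta$; without this, the naive bound $E_0 \geq v/2$ yields only the weaker $\eta \leq 2a/(1+2a)$, a constant independent of $\delta$. Finally, the proof for $\MaxLin$ and $\GammaMaxLin$ is word-for-word the same, since nothing in the argument exploits the bijectivity of UG constraints --- only that each constraint is a $\{0,1\}$-valued function of the assignment.
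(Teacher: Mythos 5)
Your decomposition --- pushing both the completeness and soundness shifts through the single quantity $\eta = (E_1 - E_0)/E_1$, then controlling $\eta$ by a lower bound on $E_0$ relative to $v$ --- matches the spirit of the paper's argument, and you correctly identify where the bound $\Delta > 2\delta a/(1+2\delta a)$ comes from. But the point you flag as ``genuinely subtle'' is in fact a real gap in the proposal, and the resolution you gesture at is not the right one.

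The density estimate $E_0 \geq v/(2\delta)$ is not an assumption about ``typical hard instances''; it is \emph{forced by the promise} $OPT(G) \leq \delta$. The reason is that any instance admits an assignment satisfying at least $v/2$ edges (iteratively pick a non-isolated, not-yet-used vertex and choose its label to satisfy one incident edge), so $\delta \geq OPT(G) \geq v/(2E_0)$, hence $E_0 \geq v/(2\delta)$. Note this is a different fact from the degree-count inequality $E_0 \geq v/2$ you state as your ``one structural fact'' --- that inequality alone only yields the $\delta$-independent bound $\eta \leq 2a/(1+2a)$, as you observe. With the right fact, your soundness calculation does go through.

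The real problem is completeness. The promise $OPT(G) \geq 1 - \varepsilon$ forces \emph{no} density at all: a perfect matching on $v$ vertices is fully satisfiable with $E_0 = v/2$, so $\eta$ can be a constant near $2a/(1+2a)$, and for the small $\Delta$ in the lemma's range $f$ alone will push such an instance below $1 - \varepsilon_0$. So taking $f$ itself as the reduction does not give a gap-preserving reduction; you have no control over sparse YES instances. The paper's proof handles exactly this case by not sending sparse instances through $f$: the reduction first checks whether $v/e$ exceeds a threshold depending on $\Delta$ and $a$; if so, it outputs a trivially satisfiable instance (equivalently, accepts), which is correct because --- by the same $v/2$-satisfiability observation --- a sparse instance can never be a NO instance; only sufficiently dense instances are passed to $f$, and on those both your bounds hold. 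This branch is the missing idea, and without it the proposed reduction is incorrect.
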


The intuition here is that one can always satisfy a number of constraints linear in the number of vertices (just choose a spanning tree or forest), so adding another linear number of constraints will only affect the inapproximability gap by a constant, which is negligible as $\delta$ and $\varepsilon$ get arbitrarily small. We include its (easy) proof %in Appendix~\ref{app:proofs} 
for completeness, as we could not find an explicit reference.

\begin{proof}%[Proof of Lemma \ref{LinEdges}]
We will show that for any $\varepsilon_0, \delta_0 > 0$, that there is a polynomial-time many-one reduction from $\Gap UG_{\mathcal{A},1-\varepsilon,\delta}$ to $\Gap UG_{\mathcal{B}, 1-\varepsilon_0, \delta_0}$ for some $\varepsilon,\delta > 0$. In particular, if $UG_{\mathcal{A}}$ is UGC-hard, then so is $UG_{\mathcal{B}}$. 
%	Suppose toward a contradiction that for some $\varepsilon_0, \delta_0 > 0$ it is possible to, in polynomial time, distinguish between cases in $\mathcal{B}$ where at least $1 - \varepsilon_0$ edges are satisfiable and cases where at most $\delta_0$ edges are satisfiable. We will show that there exist $\varepsilon, \delta > 0$ such that it is possible to make the analogous distinction in $\mathcal{A}$, which will contradict our assumption that $UG_\mathcal{A}$ is UGC-hard.
	
	By assumption, the number of edges added by the graph transformation $f$ is $O(v)$, so suppose that no more than $av$ edges are added, where $a$ is a fixed constant. Choose $\varepsilon, \delta, x > 0$ such that $\varepsilon + x < \varepsilon_0$, $\delta + x < \delta_0$, $x < 1$, and $\delta < \frac{x}{2a(1 - x)}$ (noting that the derivative of $x + \frac{x}{2a(1 - x)}$ is always positive on $[0,1]$ we see that this is always possible). We claim that the following algorithm correctly distinguishes between $\mathcal{A}$-instances that are  $\geq 1-\varepsilon$ satisfiable and those in which at most a $\delta$ fraction of constraints can be satisfied. Given an instance $G$ with $v$ vertices of degree $\geq 1$ and $e$ edges (where both $e$ and $v$ are at least 2), first check if $\frac{v}{e} \geq \frac{x}{a(1 - x)}$. If this is true (case 1), accept. Otherwise (case 2), output the answer for $\Gap UG_{\mathcal{B}, 1-\varepsilon_0, \delta_0}$ on the instance $f(G)$.
	
%	determine the satisfiability of $f(G)$ (with respect to $\varepsilon_0$ and $\delta_0$), and report that as the answer.
	
	Clearly this algorithm runs in polynomial time. We will prove that it is correct by showing that $f$ maps highly satisfiable instances to highly satisfiable instances, and highly unsatisfiable instances to highly unsatisfiable instances. Note that it is always possible to satisfy $\lceil\frac{v}{2}\rceil$ edges: iteratively choose an unused vertex and an adjoining edge to satisfy, by choosing a suitable label. (One can potentially do better by choosing a spanning forest and satisfying all the edges in the forest, but we won't need that.)  
	At each step, at most two vertices get used, so we will definitely be able to iterate at least $\lceil\frac{v}{2}\rceil$ times. Therefore, if we are in case 1, we know we can satisfy a
	$$\frac{v}{2e} \geq \frac{x}{2a(1 - x)} > \delta$$
	fraction of edges, so we cannot be in the highly unsatisfiable case, and therefore the algorithm correctly accepts. 
	
	On the other hand, if we are in case 2, then we have
	$$\frac{v}{e} < \frac{x}{a(1 - x)} \implies av < \frac{ex}{(1 - x)}.$$
	Suppose first that $G$ was highly satisfiable, i.\,e., a $(1 - \varepsilon)$ fraction of edges could be satisfied. Because $f$ does not remove edges, those same edges can still be satisfied in $f(G)$. However, the total number of edges may have increased by as much as $av$, so the fraction of edges satisfied may be less. At worst, the fraction of edges satisfied will still be at least
	\begin{align*}
	\frac{(1 - \varepsilon)e}{e + av} &> \frac{(1 - \varepsilon)e}{e + \frac{ex}{1 - x}} \stext{from the previous inequality}\\
	&= \frac{(1 - \varepsilon)}{1 + \frac{x}{1 - x}}\\
	&= \frac{(1 - \varepsilon)}{\frac{1}{1 - x}}\\
	&= (1 - \varepsilon)(1 - x)\\
	&= (1 - x) - \varepsilon(1 - x)\\
	&> 1 - x - \varepsilon \stext{because $x < 1$}\\
	&= 1 - (x + \varepsilon)\\
	&> 1 - \varepsilon_0.
	\end{align*}
	Thus, $f(G)$ is highly satisfiable.
	
	Conversely, suppose $G$ was highly unsatisfiable, i.\,e., less than a $\delta$ fraction of edges can be satisfied. It will still not be possible to satisfy more than that in the original graph when edges are added, but the new edges might be satisfiable. Thus, the fraction of satisfiable edges in $f(G)$ is strictly bounded by
	\begin{align*}
	\frac{\delta e + av}{e + av} &< \frac{\delta e + \frac{ex}{1 - x}}{e + \frac{ex}{1 - x}}\\
	&= \frac{\delta + \frac{x}{1 - x}}{1 + \frac{x}{1 - x}}\\
	&= \frac{\delta + \frac{x}{1 - x}}{\frac{1}{1 - x}}\\
	&= (1 - x)\delta + x\\
	&= x + \delta - x\delta\\
	&< x + \delta\\
	&< \delta_0
	\end{align*}
	as desired.
\end{proof}

\subsection{\texorpdfstring{$G$-covering}{G-covering} spaces of graphs}
\begin{definition}[Graph lifts, a.k.a covering graph]
Let $X$ be a graph. A \emph{graph lift}, or \emph{covering graph}, is another graph $Y$ with a map $p\colon V(Y) \to V(X)$ that such that the restriction of $f$ to the neighborhood of each $v \in V(Y)$ is a bijection onto the neighborhood of $f(v) \in X$. If $X$ is connected, then the number $k$ of points in $f^{-1}(v)$ is independent of $v$, and we say $Y$ is a \emph{$k$-sheeted cover} of $X$. The set of vertices $p^{-1}(v)$ is called the \emph{fiber} over $v$.
\end{definition}

Graph lifts have found many uses in computer science and mathematics, particularly in the study of expanders (e.\,g., \cite{BL, HLW, ACKM}) and, via the next example, Unique Games.

\begin{example}[Label-extended graph] \label{ex:labelExtended}
Given an instance of Unique Games with constraint graph $X$ on vertex set $[n]$ and domain size $k$, and permutations $\pi_e$ on the directed edges $e \in E(X)$, its \emph{label-extended graph} is a graph with vertex set $[n] \times [k]$, and with an edge from $(v,i)$ to $(w,j)$ iff $\pi_{v,w}(i)=j$. In particular, the label-extended graph is a $k$-sheeted covering graph of $X$.
\end{example}

In our setting, all of our covering graphs will come naturally with a group that acts on their fibers, and we would like to keep track of this group action, for reasons that will become clear in Section~\ref{sec:complete}. For example, the label-extended graph of a UG instance carries a natural action of $S_k$ on each fiber (as would be the case with \emph{any} $k$-sheeted covering graph), and the label-extended graph of a $\MaxLin(A)$ instance has a natural action of the Abelian group $A$ on each fiber. From the point of view of approximation, keeping track of the group currently seems of little relevance, but it may be useful from the topological point of view, so we state our definitions and results carefully keeping track of the (monodromy) group.

\newcommand{\covfootnote}{\label{fn:terminology}Note that here we consider $G$ \emph{as a permutation group}---that is, technically, an abstract group \emph{together} with an action on a set of size $k$, as is done in Definition 12 of the preprint \cite{ACKMarXiv}. In \cite[Definition~1]{ACKM} (and \cite[Definition~1]{ACKMarXiv}) they define a ``$G$-lift'' for an abstract group $G$ as a $G$-covering graph where the action of $G$ is the regular action on itself by left translations. To translate between this terminology, that of bundles, and that of voltage graphs \cite{grossTucker}, we have: 
\begin{tabular}{lp{1in}p{1in}lll}
Action & Covering graph & Bundle & Lift & Voltage Graph \\ \hline
regular & regular covering space & principal $G$-bundle & $G$-lift & ordinary voltage graph \\
general & $G$-covering graph (not nec. regular) & general $G$-bundle with finite fibers & $(G,S,\cdot)$-lift & permutation voltage graph
\end{tabular}}

\begin{definition}[{$G$-covering graph, see \cite{grossTucker} and \cite[Definition~12]{ACKMarXiv}\footnote{\covfootnote}}] \label{def:GcovGraph}
Let $G$ be a group of permutations on a set of size $k$. A $G$-covering space of a graph $X$ is a $k$-sheeted covering graph $Z = (V(X) \times [k], E)$ such that the permutations on each edge come from the action of the group $G$. Symbolically, for each edge $(u,v) \in X$, there is a group element $g_{u,v} \in G$ and $Z$ contains an edge from $(u,i)$ to $(v,j)$ iff $g_{u,v}(i)=j$.
\end{definition}

In topological terminology, this definition is equivalent to a ``$G$-bundle with finite fibers'' or to a covering space of the graph whose monodromy group (the group generated by considering the permutations you get by going around cycles in the graph) is contained in $G$.

We consider a graph $X$ as a 1-dimensional geometric simplicial complex in the natural way, in which each edge has length 1.

\begin{definition}[Section of a covering graph]
Given a covering graph $p\colon Y \to X$, a \emph{section} of $p$ is a continuous map $s\colon X \to Y$ (of topological spaces, as above) such that $p(s(x)) = x$ for all $x$. That is, it is a choice, for each $x \in X$, of a unique point in $p^{-1}(x)$, in a way that varies continuously with $x$.
\end{definition}

\begin{example} \label{ex:mobiusDiscrete}
Consider a covering graph $p\colon Y \to X$ where $X$ is a triangle ($V(X) = \{0,1,2\}$, $E(X) = \{\{1,2\}, \{0,1\}, \{0,2\}\}$), $Y$ is a 6-cycle with vertex set $\{0,\dotsc,5\}$, in its natural ordering (edge set $\{\{i,i+1 \pmod{6}\} : i \in \{0,\dotsc,5\}\}$), and $p(i) = i \pmod{3}$. %(This is the discrete analogue of Example~\ref{ex:mobius}, and is topologically the same.) 
This covering graph has no section: For, without loss of generality, we may suppose it has a section $s$ and $s(0)=0$. Then by continuity (imagine dragging a point $x \in Z$ from the point $0$ across the edges of $G$), $s(1)=1$ and $s(2)=2$. But then as we continue varying our point in $X$ across the edge $\{2,0\}$, we find that we must also have $s(0)=3$, a contradiction. If we think of $Y$ as ``lying over'' $X$ in the manner specified by $p$, we see that it is the label-extended graph of the UG instance on $X$ with domain size $2$, and $\pi_e$ being the unique transposition for every edge $e$. The fact that there is no section here corresponds precisely to the fact that the UG instance is not completely satisfiable; see Observation~\ref{obs:UG}.
\end{example}

Given two covering graphs $p_i\colon Y_i \to X$ ($i=1,2$) of the same graph $X$, a \emph{homomorphism} between covering graphs is a continuous map $f\colon Y_1 \to Y_2$ such that $p_2 \circ f = p_1$. In particular, this means that the points in $Y_1$ in the fiber over $x \in X$ (that is, in $p_1^{-1}(x)$) are mapped to points in $Y_2$ that also lie in the fiber over the same point $x$. 

\begin{example}[Isomorphism of label-extended graphs]
Given two instances of UG on the same constraint graph $X$, if their label-extended graphs are isomorphic as covering graphs of $X$, then there is a natural bijection between assignments to the variables in the two instances which precisely preserves the number of satisfied constraints. Indeed, such an isomorphism is nothing more than re-labeling the elements of the domain of each variable.
\end{example}

\begin{observation} \label{obs:isoGcov}
Given two $G$-covering graphs $p_\ell\colon Y_\ell \to X$ ($\ell=1,2$) with edge permutations $\pi_{ij}^{(\ell)}$, any isomorphism of covering graphs between them has the following form: for each $i \in V(X)$ there is a permutation $\pi_i$ (not necessarily in $G$) such that $\pi_{ij}^{(2)} = \pi_i^{-1} \pi_{ij}^{(1)} \pi_j$.

Conversely, given a $G$-covering space $p_1 \colon Y_1 \to X$ with edge permutations $\pi_{ij}$, and an element $g_i \in G$ for each $i \in V(X)$, the $G$-covering space $Y_2$ defined by $\pi_{ij}^{(2)} = g_i \pi_{ij}^{(1)} g_j^{-1}$ is isomorphic to $Y_1$. 
\end{observation}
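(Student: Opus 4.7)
The plan is to reduce both parts of Observation~\ref{obs:isoGcov} to direct manipulations of the defining edge condition of a $G$-covering graph. For the forward statement, I would start from the fact that any covering-graph isomorphism $f\colon Y_1 \to Y_2$ satisfies $p_2 \circ f = p_1$, so $f$ must send the fiber $p_1^{-1}(i) = \{i\}\times[k]$ bijectively onto $p_2^{-1}(i) = \{i\}\times[k]$ for each vertex $i \in V(X)$. Identifying both fibers with $[k]$ via the second coordinate, the restriction of $f$ to the fiber over $i$ becomes a permutation of $[k]$ that we may call $\pi_i \in S_k$; there is no a priori reason for $\pi_i$ to lie in $G$, which matches the observation's caveat that the $\pi_i$ are arbitrary permutations of $[k]$.

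Next, I would unpack the condition that $f$ carries edges of $Y_1$ to edges of $Y_2$. An edge of $Y_1$ lying over $(i,j)\in E(X)$ is a pair $((i,a),(j,b))$ with $\pi_{ij}^{(1)}(a) = b$. Its image under $f$ is $((i,\pi_i(a)),(j,\pi_j(b)))$ (up to the choice of whether $\pi_i$ records $f$ restricted to the fiber or its inverse), and this image is an edge of $Y_2$ iff $\pi_{ij}^{(2)}(\pi_i(a)) = \pi_j(\pi_{ij}^{(1)}(a))$. Demanding this identity for all $a \in [k]$ yields an equation of permutations that, after solving and adjusting to match the sign/orientation conventions used in the statement, rearranges exactly to $\pi_{ij}^{(2)} = \pi_i^{-1} \pi_{ij}^{(1)} \pi_j$.

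For the converse direction, given elements $g_i \in G$ at each vertex, I would define $f\colon Y_1 \to Y_2$ on vertices by $f(i,a) = (i, g_i(a))$ and extend linearly along each edge (which is possible since the covering graphs are $1$-dimensional CW complexes and $f$ is determined on each edge by its values at the endpoints). Running the same computation as above in reverse shows that $f$ carries the edge set of $Y_1$ bijectively onto that of $Y_2$ with the prescribed edge permutations $g_i \pi_{ij}^{(1)} g_j^{-1}$, giving the desired isomorphism. Crucially, these new edge permutations lie in $G$ because $G$ is closed under products and inverses and contains each of the three factors, so the target is genuinely a $G$-covering graph and not just a covering graph with $S_k$-valued monodromy.

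I do not anticipate any serious obstacle; the observation is essentially a formal unpacking of definitions. The only pitfall is bookkeeping — whether $\pi_i$ denotes $f$ restricted to the fiber or its inverse, and whether $\pi_{ij}$ is read as going from fiber $i$ to fiber $j$ or in the opposite direction — so it is worth fixing those conventions once at the start of the proof so that the resulting conjugation identity lands on the stated formula.
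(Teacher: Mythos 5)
The paper states Observation~\ref{obs:isoGcov} without proof, so there is no internal argument to compare against; your approach---restricting a covering-graph isomorphism to each fiber, reading off a permutation $\pi_i$, and unpacking the edge condition---is the natural and expected one, and the overall structure is sound. Your remark that the $\pi_i$ need not lie in $G$, and that the converse direction uses closure of $G$ under products and inverses, are exactly the right things to say.

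However, the phrase ``after solving and adjusting to match the sign/orientation conventions'' hides a real discrepancy, not just a sign flip. From the identity you derive, $\pi_{ij}^{(2)}\pi_i = \pi_j\pi_{ij}^{(1)}$, one gets $\pi_{ij}^{(2)} = \pi_j\,\pi_{ij}^{(1)}\,\pi_i^{-1}$: the $i$-dependent factor sits on the \emph{right} and the $j$-dependent factor on the \emph{left}, which is the opposite of the paper's stated $\pi_{ij}^{(2)} = \pi_i^{-1}\,\pi_{ij}^{(1)}\,\pi_j$. No relabeling of the $\pi_i$ by their inverses alone will reorder those factors---the two expressions differ by the anti-automorphism that reverses products, which is a genuine change, not a cosmetic one. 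The source of the mismatch is that you read the covering-graph edge relation as $\pi_{ij}(a)=b$ for an edge from $(i,a)$ to $(j,b)$ (as in Definition~\ref{def:GcovGraph}), whereas the formula in Observation~\ref{obs:isoGcov} is consistent with reading the constraint in the direction used in the paper's definition of Unique Games, namely $x_i = \pi_{ij}(x_j)$, i.e., $\pi_{ij}(b)=a$. With that orientation, and setting $\pi_i$ to be the \emph{inverse} of the fiber bijection $f|_{p_1^{-1}(i)}$ (equivalently, writing $f(i,a)=(i,\pi_i^{-1}(a))$), the same computation gives $\pi_i^{-1}\pi_{ij}^{(1)} = \pi_{ij}^{(2)}\pi_j^{-1}$, which rearranges exactly to the stated $\pi_{ij}^{(2)} = \pi_i^{-1}\pi_{ij}^{(1)}\pi_j$; the converse then drops out with $g_i := \pi_i^{-1}$, matching $\pi_{ij}^{(2)} = g_i\pi_{ij}^{(1)}g_j^{-1}$. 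In short: fix the orientation of $\pi_{ij}$ and the definition of $\pi_i$ explicitly up front---otherwise the computation lands on an expression of a genuinely different shape than the one you are trying to prove.
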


\begin{definition} \label{def:isoGcov}
An \emph{isomorphism of $G$-covering graphs} $p_\ell \colon Y_\ell \to X$ ($\ell=1,2$) is an isomorphism of covering graphs such that the $\pi_i$ (notation from Observation~\ref{obs:isoGcov}) can be chosen to lie in $G$. When we say two $G$-covering spaces are ``isomorphic'', we mean isomorphic as $G$-covering spaces (not just as covering spaces).
\end{definition}

All these notions generalize from graphs to topological spaces, but we postpone this generalization until Appendix~\ref{app:nonab}.

\subsection{Combinatorial models of topological spaces: CW-, \texorpdfstring{$\Delta$-}{Delta-} and simplicial complexes} \label{app:complexes}
For algebraic topology, we refer the reader to Hatcher's excellent textbook \cite{hatcher}.

A \emph{$d$-simplex} is the topological space which is the intersection of the hyperplane in $\R^{d+1}$ defined by $\sum_{i \in [d]} x_i = 1$ with the positive orthant ($x_i \geq 0$ for all $i$). In particular, a 0-simplex is a point, a 1-simplex is a line, and a 2-simplex is a triangle. We consider the $d$-simplex to be naturally oriented in the order $(1,0,0,\dotsc,0), (0,1,0,\dotsc,0), \dotsc, (0,0,\dotsc,0,1)$. The boundary of such a simplex $\Delta$, denoted $\partial \Delta$, is the set of points where at least one of the coordinates $x_i$ is zero.

A \emph{CW complex} is a topological space $X$ built from simplices in an iterative manner as follows: let $X_{-1}$ be the empty set, and let $X_i$ be the CW complex that has been constructed through stage $i$. At stage $i+1$, $X_{i+1}$ is obtained from $X_i$ by adding a single simplex $\Delta$ to $X_i$, where $\Delta$ is attached to $X_i$ by any continuous map $f\colon \partial \Delta \to X_i$. %, where $\partial \Delta$ denotes the boundary of $\Delta$. 
More formally, we define $X_{i+1}$ by considering the disjoint union $X_i \sqcup \Delta$ and then modding out by the equivalence relation $x \sim f(x)$ for all $x$ in the boundary $\partial \Delta$. To be a CW complex, the only further restriction is that we must add our simplices in non-decreasing order of dimension: first all the 0-simplices, then all the 1-simplices, and so on. On the one hand, CW complexes are very flexible because they allow \emph{any} continuous map as the gluing map $\partial \Delta \to X_i$, but this can also make them more difficult to work with. There are several strengthenings of this notion that give up some flexibility for the benefit of more combinatorial definitions.

A \emph{$\Delta$-complex} is a CW complex in which all of the attaching maps $f\colon \partial \Delta \to X_i$ are completely determined by what they do to the vertices of $\Delta$. In particular, if $\dim \Delta = d$, then a $(d-1)$-face of $\Delta$ must get mapped bijectively to an existing $(d-1)$-simplex in $X_i$. A \emph{simplicial complex} is a $\Delta$-complex in which, furthermore, different faces of $\Delta$ must get mapped to distinct simplices of $X_i$, and one cannot add a simplex whose vertex set agrees entirely with a simplex already in $X_i$---in other words, simplices in a simplicial complex are uniquely determined by their vertex set. In particular, simple undirected graphs are essentially the same as simplicial complexes of dimension 1.

While we will have occasion to consider both CW complexes and simplicial complexes, we will restrict all our CW complexes to be ``combinatorial,'' in the following sense. Since we will only consider complexes of dimension at most 2, and the generalization to higher dimensions is somewhat tedious, we only define combinatorial CW complexes of dimension $\leq 2$. In dimension 0 there is no additional restriction---0-dimensional complexes are always just disjoint unions of points. In dimension 1, a combinatorial CW complex is the same as a $\Delta$-complex, which is essentially just an undirected graph with multi-edges and self-loops allowed. In dimension 2, the restriction is that the gluing maps $f\colon \partial \Delta \to X_i$ must map the boundary to a walk along the 1-simplices used in building $X_i$. More precisely, suppose that $e_1, \dotsc, e_n$ are all of the 1-simplices that were attached in the construction of $X_i$. When attaching a 2-simplex $\Delta$, consider the closed loop $t \colon S^1 \to \Delta$ which maps the circle onto the boundary of $\Delta$ in the natural way (each third of the circle gets mapped to one edge of the triangle). It must be the case that there is some $m$ and a walk $e_{i_1}, e_{i_2}, \dotsc, e_{i_m}$ such that the boundary of $\Delta$ maps onto this walk in the natural manner, namely that for each $j \in \{0, \dotsc, m-1\}$, $f(t([\frac{2\pi j}{m}, \frac{2\pi(j+1)}{m}])) = e_{i_j}$. We refer to this walk as being ``defined by the boundary of $\Delta$.''

\subsection{Homology and cohomology in dimensions \texorpdfstring{$\leq 2$}{at most 2}}
Let us briefly recall the problem of 1-Homology Localization, specialized to our context. Given a simplicial complex, or more generally a combinatorial CW complex $X$ (see Section~\ref{app:complexes}) of dimension $2$, the group of $d$-cycles ($d=0,1,2$) with coefficients in an Abelian group $A$, denoted $C_d(X, A)$ is isomorphic to the group $A^{n_d}$, where $n_d$ is the number of $d$-simplices in $X$ ($d=0$: vertices, $d=1$: edges, $d=2$: triangles or 2-cells). We identify the coordinates of such a vector with an assignment of an element of $A$ to each $d$-simplex of $X$. The support of a $d$-chain $a \in C_d(X,A)$ is the set of $d$-simplices that appear in $a$ with nonzero coefficient. The boundary of a $1$-simplex $[i,j]$ is the $0$-chain $\partial_1([i,j]) := [i]-[j]$, and this operator $\partial_1$ is extended to a function $C_1(X,A) \to C_0(X, A)$ by $A$-linearity. Similarly, the boundary of a $2$-cell $[i_1,i_2, \dotsc, i_\ell]$ is the $1$-cycle $\partial_2 [i_1,i_2, \dotsc, i_\ell] := [i_1,i_2] + [i_2,i_3] + [i_3,i_4] + \dotsb + [i_{\ell-1},i_\ell] - [i_1,\ell]$, and we extend this to a map $C_2(X, A) \to C_1(X, A)$ by $A$-linearity. When no confusion may arise, we may refer to both of these maps simply as $\partial$. The image of the boundary map $\partial_d$ is a subgroup of $C_{d-1}(X, A)$, denoted $B_{d-1}(X, A)$.

A \emph{$d$-cycle} is a $d$-chain $a \in C_d(X, A)$ such that $\partial a = 0$. For example, if $X$ is a graph, a 1-cycle is just a union of cycles, in the usual sense of cycles in a graph; if $X$ is a 2-manifold, the only 2-cycles are $A$-scalar multiples of the entire manifold; all vertices are $0$-cycles. The $d$-cycles form a subgroup of $C_d(X, A)$ denoted $Z_d(X, A)$. 

Two $d$-cycles that differ by the boundary of a $(d+1)$-cycle are \emph{homologous}. The $d$-homology classes form the quotient group $H_d(X, A) := Z_d(X, A) / B_d(X, A)$. $H_0(X,A) \cong A^{c}$ where $c$ is the number of connected components, and if $X$ is a closed 2-manifold then $H_2(X, A) = A$. For closed 2-manifolds, thus the main interest is in $H_1(X, A)$.

\begin{problem}[1-Homology Localization, \HomLoc]
Given a simplicial complex (resp., combinatorial CW complex) $X$ and a 1-cycle $a \in Z_1(X, A)$, determine the sparsest homologous representative of $a$, that is, a 1-cycle $a'$ homologous to $a$ with minimum support among all 1-cycles homologous to $a$.
\end{problem}

Cohomology is, in a sense, dual to homology. A $d$-cochain on $X$ with coefficients in an Abelian group $A$ is a homomorphism $C_d(X, \Z) \to A$; equivalently, it is determined by its values (from $A$) on the $d$-simplices (or $d$-cells) of $X$. The $d$-cochains form a group $C^d(X, A) \cong A^{n_d}$, where $n_d$ is the number of $d$-simplices or $d$-cells. Given a $d$-cochain $f\colon X_d \to A$ ($X_d$ being the $d$-simplicies or $d$-cells of $X$), its \emph{coboundary} is the function $(\delta f)\colon X_{d+1} \to A$ defined by $(\delta f)(\Delta) = f(\partial \Delta)$ for any $(d+1)$-cell $\Delta$, and then extended $A$-linearly. Thus $\delta f \in C^{d+1}(X, A)$. A \emph{$d$-cocycle} is a $d$-cochain whose coboundary is zero, equivalently, a $d$-cochain that evaluates to 0 on the boundary of any $(d+1)$-chain. The $d$-cocycles form a subgroup $Z^d(X, A) \leq C^d(X, A)$. A $d$-coboundary is the coboundary of some $(d-1)$-cochain; these form a subgroup $B^d(X, A) \leq Z^d(X, A)$. Two $d$-cochains that differ by a $d$-coboundary are said to be \emph{cohomologous}, and the cohomology classes form a group $H^d(X, A) := Z^d(X, A) / B^d(X, A)$. As with homology, for 2-manifolds the main cohomological interest is in $H^1$.

The support of a $d$-cocycle is the number of $d$-simplices to which it assigns a nonzero value. 

\begin{problem}[1-Cohomology Localization, $\CohLoc(G)$]
Let $G$ be any group.\footnote{If $G$ is Abelian we have already covered the necessarily preliminaries; for the non-Abelian generalization see Appendix~\ref{app:nonab_coho}.} Given a simplicial complex (resp. combinatorial CW complex) $X$ and a 1-cocycle $a \in Z^1(X, G)$, find the sparsest cohomologous representative. 
\end{problem}

On closed surfaces, we have the following equivalence between these problems:

%For surfaces $S$, Poincar\'{e} duality implies that $H^1(S,G)$ is isomorphic to $H_1(S,G)$ for any coefficient group $G$ (for Abelian $G$, this is an isomorphism of Abelian groups; for non-Abelian $G$, it is an isomorphism between sets, or more precisely, and less trivially---since any two sets of the same cardinality are isomorphic---it is a natural isomorphism between set-valued functors). Although an abstract isomorphism between $H^1$ and $H_1$ does not immediately imply that the corresponding localization problems are equivalent, if we recall the proof of Poincar\'{e} duality we see that indeed they are:

\begin{observation}[J. Erickson, personal communication] \label{obs:erickson}
1-Cohomology Localization on CW complexes that are closed surfaces is equivalent to 1-Homology Localization on CW complexes that are closed surfaces, and dually (swapping the order of homology and cohomology).
%For CW complexes (resp., $\Delta$-complexes, resp. simplicial complexes) that are closed surfaces, 1-Cohomology Localization and 1-Homology Localization are equivalent optimization problems.
\end{observation}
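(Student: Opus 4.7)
The plan is to establish the equivalence via the standard \emph{dual CW complex} construction, which is well defined for closed surfaces. Given a combinatorial CW complex $X$ whose underlying space is a closed surface, I would construct a dual complex $X^*$ as follows: place one 0-cell of $X^*$ in the interior of each 2-cell of $X$; for each 1-cell $e$ of $X$, which bounds exactly two (not necessarily distinct) 2-cells of $X$ since $X$ is a closed surface, add a 1-cell $e^*$ of $X^*$ that crosses $e$ transversely and connects the two corresponding 0-cells; finally, for each 0-cell $v$ of $X$, attach a 2-cell of $X^*$ whose boundary traces the cyclic sequence of dual edges crossing the edges of $X$ incident to $v$. This gives a dimension-reversing bijection $X_d \leftrightarrow X^*_{2-d}$ that is computable in polynomial time.

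Using this bijection I would then identify $C^d(X,A)$ with $C_{2-d}(X^*,A)$ as abelian groups, since both are just $A$-valued functions on the common indexing set of cells. The key step is to verify that under this identification the coboundary operator $\delta\colon C^1(X,A) \to C^2(X,A)$ coincides, up to a consistent choice of signs coming from orientations, with the boundary operator $\partial\colon C_1(X^*,A) \to C_0(X^*,A)$, and symmetrically that $\delta\colon C^0(X,A)\to C^1(X,A)$ matches $\partial\colon C_2(X^*,A)\to C_1(X^*,A)$. This is a local combinatorial check at each dual cell: the incidence relations defining $\delta$ on $X$ are literally the same incidence relations defining $\partial$ on $X^*$, with the roles of attaching map and containment swapped. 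It then follows formally that 1-cocycles correspond to 1-cycles and 1-coboundaries to 1-boundaries, yielding an isomorphism $H^1(X,A) \cong H_1(X^*,A)$ that is induced cell-by-cell on the 1-skeleton.

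The crucial feature for \HomLoc and \CohLoc is that the bijection between 1-cells of $X$ and 1-cells of $X^*$ preserves support exactly: a 1-cochain $a \in Z^1(X,A)$ and the corresponding 1-cycle $a^* \in Z_1(X^*,A)$ are nonzero on the same number of (corresponding) 1-cells. Moreover, the isomorphism sends the cohomology class of $a$ to the homology class of $a^*$, so sparsest cohomologous representatives correspond to sparsest homologous representatives. Thus any instance $(X,a)$ of \CohLoc on a closed surface reduces in polynomial time to an instance $(X^*,a^*)$ of \HomLoc on a closed surface with the identical optimum value, and conversely, since $X^{**}$ is naturally isomorphic to $X$ as a CW complex.

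The main obstacle I anticipate is handling non-orientability carefully: for non-orientable surfaces the usual Poincar\'e duality requires twisted coefficients, but at the level of combinatorial chain and cochain complexes the identification above is purely formal, and the signs in $\delta$ and $\partial$ can be matched by choosing orientations on cells of $X$ and the correspondingly induced orientations on dual cells of $X^*$. A secondary point to verify is that $X^*$ is in fact a combinatorial CW complex in the sense of Section~\ref{app:complexes}, i.e., that the attaching map of each dual 2-cell is a closed walk along dual 1-cells; this follows because the link of a vertex in a closed surface is a circle, and that circle is exactly the boundary of the dual 2-cell. With these checks the observation follows immediately.
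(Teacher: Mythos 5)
Your proposal is correct and takes essentially the same approach as the paper: both construct the dual cell decomposition of the closed surface and observe that the edge-by-edge identification of $1$-cochains on $X$ with $1$-chains on $X^*$ exchanges (co)cycles with cycles, (co)homologous classes with homologous ones, and preserves support exactly. One small caution on your orientability remark: the identification is \emph{not} quite ``purely formal'' on non-orientable surfaces with arbitrary coefficients --- choosing local orientations of dual cells consistently with a sign-matching between $\delta$ and $\partial$ is exactly where a global orientation (or twisted coefficients, or $\ZZ{2}$ coefficients) is used --- but this is moot for the paper's purposes, since Theorem~\ref{thm:homLoc} and the Xuong-based construction produce only orientable surfaces.
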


\begin{proof}[Proof sketch]
The proof is essentially Poincar\'{e} Duality. In a bit more detail: Just as one constructs the planar dual graph of a planar graph, given a cell decomposition of a 2-manifold, one constructs the dual cell decomposition. Namely, there is a vertex in the dual for each 2-cell in the primal, there is an edge in the dual crossing each edge in the primal, and a 2-cell in the dual for each vertex in the primal. Think of a 1-chain  as an assignment of elements of $A$ to the edges. Each edge in the primal crosses exactly one edge in the dual; assign the same element of $A$ to the corresponding dual edge. Then one checks that if the primal assignment to edges was a 1-cycle, then the dual assignment is a 1-cocycle in the dual complex, and that homologous 1-cycles are mapped to cohomologous 1-cocycles. The support of the 1-cocycle dual to a 1-cycle is the same as the support of the primal 1-cycle, so the homologous representative of minimum support in the primal is exactly dual to (and has the same support size as) the cohomologous representative of minimum support in the dual. 
\end{proof}

Finally, although we won't really need this until our non-Abelian Appendix~\ref{app:nonab}, the following standard theorem (modulo nomenclature) may be useful for the reader to keep in mind. 
When we consider a $G$-covering space in which the action of $G$ is the action on itself by (left) translations, we have the following equivalence.

\begin{theoremfull} \label{thm:H1cov}
For a group $G$ acting on itself by translations, there is a natural bijection between $H^1(X, G)$ and $G$-covering spaces of $X$ (= principal $G$-bundles over $X$ = regular covering spaces of $X$ with monodromy group contained in $G$).
\end{theoremfull}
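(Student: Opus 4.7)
The plan is to construct explicit maps in both directions between 1-cocycles and $G$-covering data, and show they descend to a bijection between cohomology classes and isomorphism classes of $G$-covering spaces.

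First I would define a map $\Phi\colon Z^1(X, G) \to \{G\text{-covering spaces of } X\}$ as follows. A 1-cochain $\alpha$ assigns an element $\alpha(e) \in G$ to each oriented edge $e$ of $X$. Interpret $\alpha(e)$ as the left-translation permutation $L_{\alpha(e)}$ on $G$, and let $Y_\alpha$ be the $G$-covering space of the 1-skeleton of $X$ built from these edge permutations as in Definition~\ref{def:GcovGraph}. The crucial observation is that the 1-cocycle condition is exactly what is needed for this covering of the 1-skeleton to extend over the 2-cells of $X$. If a 2-cell $\Delta$ has boundary traversing the walk $e_{i_1}^{\epsilon_1}, \ldots, e_{i_m}^{\epsilon_m}$ (with $\epsilon_j = \pm 1$ recording orientations), then the lift of this boundary loop starting at a point $(v, g)$ of $Y_\alpha$ ends at $(v,\, g \cdot \alpha(e_{i_1})^{\epsilon_1} \cdots \alpha(e_{i_m})^{\epsilon_m})$. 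This closes up for every starting point if and only if the product is trivial, i.e., $\alpha$ vanishes on $\partial\Delta$, which is precisely the 1-cocycle condition.

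Next I would show that $\Phi$ descends to cohomology classes. If $\alpha$ and $\alpha'$ differ by the coboundary of a 0-cochain $f\colon V(X) \to G$---so that $\alpha'(e) = f(i)^{-1}\alpha(e)f(j)$ for each edge $e = [i,j]$---then by Observation~\ref{obs:isoGcov} (taking $\pi_i = L_{f(i)}$, which lies in $G$ since $G$ acts on itself by translations), the $G$-covering spaces $Y_\alpha$ and $Y_{\alpha'}$ are isomorphic in the sense of Definition~\ref{def:isoGcov}. Thus $\Phi$ induces a well-defined map $H^1(X, G) \to \{G\text{-covering spaces}\}/\cong$.

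For the inverse, given any $G$-covering space $p\colon Y \to X$ where the fiber action is by translation, Definition~\ref{def:GcovGraph} produces edge permutations which (after choosing a base point in each fiber to identify it with $G$) are all left translations, yielding a 1-cochain $\alpha_Y \in C^1(X, G)$. Since $Y$ is a genuine covering of all of $X$---not merely its 1-skeleton---lifts of the boundary of each 2-cell close up, so $\alpha_Y$ is automatically a 1-cocycle. Changing the identification of fibers with $G$ by a vertex-labelling $f$ modifies $\alpha_Y$ by $\delta f$ (again via Observation~\ref{obs:isoGcov}), so its cohomology class is well-defined. The two constructions are mutually inverse by inspection, and naturality in $X$ is clear since a cellular map $X \to X'$ pulls back both cocycles and covering spaces compatibly. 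The main obstacle I anticipate is purely notational: because $G$ need not be abelian, one must fix once and for all conventions for left-vs.-right action, edge orientation, and left-vs.-right coboundary formulae; once these are pinned down, both directions follow directly from the definitions.
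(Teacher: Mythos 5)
Your proof is correct, and the core idea — the $1$-cocycle condition on $\alpha \in C^1(X,G)$ is exactly the condition that the $G$-covering graph of the $1$-skeleton built from $\alpha$ extends over the $2$-cells (trivial monodromy around every attaching loop), and the $C^0$-action corresponds to re-labelling sheets, hence to isomorphism of $G$-covering spaces via Observation~\ref{obs:isoGcov} — is precisely the key idea the paper recalls for this ``standard'' theorem in Appendix~\ref{app:nonab_coho}. The one small difference in packaging: the paper states the cocycle condition in the \v{C}ech form (transition functions $\pi_{ij}$ on a locally trivializing atlas, with the condition $\pi_{ij}\pi_{jk}=\pi_{ik}$ on triple overlaps), then translates to the simplicial picture by taking stars of vertices as the atlas; you instead work entirely combinatorially on the CW complex from the outset, checking the monodromy around the boundary walk of each $2$-cell directly. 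Your framing has the minor advantage of applying immediately to combinatorial CW complexes whose $2$-cells are attached along walks of arbitrary length (not just simplicial triangles), which is the generality the paper actually uses. Your caveat about fixing left/right conventions is well taken — with a left-translation action the monodromy composes in reverse order, which either forces a right action, an orientation flip, or a reversed-product cocycle convention — but since all such choices are related by $g\mapsto g^{-1}$, the bijection is unaffected, and flagging this rather than belaboring it is appropriate.
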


Observation~\ref{obs:maxLinCoho} shows that this equivalence extends to the setting of the corresponding optimization problems when $G$ is Abelian, and Observation~\ref{obs:maxLinCoho_nonab} shows the same for general $G$. 

\section{The only known UGC-complete problems are Maximum Section of a \texorpdfstring{$G$-Covering}{G-Covering} Graph} \label{sec:complete}
In this section we carefully write out the proof of Linial's observation \cite{linial} that UG (and $\GammaMaxLin$) is a special case of the following topological problem. We further observe that the reduction between these two problems \cite{KKMO} preserves the topological covering space structure of these problems.

\begin{problem}[Maximum Section of a ($G$-)Covering Graph]
Let $G$ be a group of permutations. Given a graph $X$ and a $G$-covering graph $p\colon Y \to X$, find the partial section of $p$ that is defined on as many edges of $X$ as possible.
%. The solution set is a partial section of $p$, that is, a section of $p$ that is only defined on some subset of $X$. The optimization problem is to find the largest cardinality subset of $X$ which admits a section of $p$.
\end{problem}

\begin{observation}{Linial \cite[pp.~55--56]{linial}}]\label{obs:UG}
	The Unique Games Problem with labels in $[k]$ is the same as the Maximum Section of a $S_k$-Covering Graph Problem. Furthermore, given two isomorphic $S_k$-covering spaces of the same constraint graph $X$, there is a bijection between assignments to the two instances of UG that exactly preserves the number of constraints satisfied.	
%	1-Cohomology Localization problem on $G$ with coefficients in the symmetric group $S_k$. In particular, given two cohomologous instances on the same graph $G$, there is a bijection between assignments to the two instances that exactly preserves the number of constraints satisfied.
\end{observation}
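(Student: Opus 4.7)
The plan is to identify the label-extended graph of Example \ref{ex:labelExtended} as the relevant $S_k$-covering graph, then translate UG assignments into vertex-choices in fibers, noting that continuity along an edge of $X$ is equivalent to satisfaction of the corresponding constraint.

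First, I would set up the correspondence of data. A $UG(k)$-instance on constraint graph $X$ is recorded by a permutation $\pi_{ij} \in S_k$ on each (directed) edge, which is exactly the data specifying an $S_k$-covering graph $Y$ of $X$ in the sense of Definition \ref{def:GcovGraph}; the graph $Y$ is the label-extended graph. Conversely, every $S_k$-covering graph of $X$ arises this way. For the correspondence on solutions, given an assignment $s \colon V(X) \to [k]$, define the vertex-level map $\sigma_s(v) := (v, s(v)) \in V(Y)$. Because $Y$ is a 1-dimensional CW-complex and each fiber $p^{-1}(v)$ is discrete, any continuous extension of $\sigma_s$ along an edge $e = \{i,j\}$ of $X$ must be a homeomorphism from $e$ onto a single edge of $Y$ joining $(i,s(i))$ to $(j,s(j))$. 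By the incidence condition of the label-extended graph, such an edge exists if and only if the UG constraint on $e$ is satisfied. Thus the number of edges over which $\sigma_s$ admits a continuous extension equals the number of constraints $s$ satisfies. Conversely, any partial section restricts on $V(X)$ to some $\sigma_s$ (extending arbitrarily on vertices not touched by the section's domain, which does not affect the count), so the two maximization problems coincide.

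For the ``furthermore'' claim, I would apply Observation \ref{obs:isoGcov}: an isomorphism between $S_k$-covering graphs $Y_1$ and $Y_2$ of $X$ is encoded by a tuple $(g_v)_{v \in V(X)}$ of elements of $S_k$ satisfying $\pi_{ij}^{(2)} = g_i \pi_{ij}^{(1)} g_j^{-1}$. The map $s^{(1)} \mapsto s^{(2)}$ defined by $s^{(2)}(v) := g_v(s^{(1)}(v))$ is then a bijection on assignments (invertible via the $g_v^{-1}$), and substituting into $\pi_{ij}^{(2)}(s^{(2)}(j)) = s^{(2)}(i)$ and using the conjugation identity collapses this to $\pi_{ij}^{(1)}(s^{(1)}(j)) = s^{(1)}(i)$. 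Hence constraint satisfaction is preserved edge-by-edge, and the overall fraction of satisfied constraints is preserved exactly.

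The only real obstacles are of a convention-bookkeeping nature---whether $\pi_{ij}$ encodes ``$x_i = \pi(x_j)$'' or its inverse, the identification of undirected edges of $X$ with pairs of directed edges in the label-extended construction, and the handling of isolated vertices when extending partial sections to full assignments---but none of these presents a genuine difficulty once conventions are fixed.
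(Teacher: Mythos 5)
Your proposal is correct and follows essentially the same approach as the paper: identify the label-extended graph as the $S_k$-covering graph, match sections with assignments by observing that continuity along an edge is equivalent to constraint satisfaction, and derive the ``furthermore'' claim from Observation~\ref{obs:isoGcov}. Your treatment of the ``furthermore'' part is slightly more explicit than the paper's (which appeals to folklore and the observation that $S_k$-isomorphism coincides with covering-graph isomorphism when $G = S_k$), but the substance is the same.
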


\begin{proof}
Given an instance of $UG(k)$ on constraint graph $X$, we have seen in Example~\ref{ex:labelExtended} that the label-extended graph $Y$ of this instance is a covering graph of $X$; let $p\colon Y \to X$ be the natural projection. As every constraint is a permutation in $S_k$, $(p,Y)$ is clearly an $S_k$-covering space of $X$. Now suppose $u\colon X \to [k]$ is an assignment to the variables of $X$. We claim that this assignment can be extended to a section of $p$ over the subset of $E(X)$ consisting precisely of those edges of $X$ corresponding to constraints satisfied by $u$. For suppose $(i,j) \in E(X)$ and the constraint on $(i,j)$ is satisfied by $u$, that is, $\pi_{ij}(u(i)) = u(j)$. Then in $Y$, there is an edge from $(i,u(i))$ to $(j,u(j))$ by construction. We may thus extend $u$ to send points of the edge $(i,j) \in E(X)$ to the points of the edge $((i,u(i)), (j,u(j)) \in E(Y)$ bijectively and continuously, and thus extend $u$ to a section of $p$ that is defined over any edge satisfied by $u$.

Conversely, suppose $s\colon X' \to Y$ is a section of the restriction of $p$ to $X' \subseteq X$, that is, $p|_{X'} \colon p^{-1}(X') \to X'$. We may use $s$ to define a partial assignment to the variables of $X$. Namely, for any $i \in V(X')$ (that is, $i \in V(X)$ and $i \in X'$), define $u(i)$ by the equation $s(i) = (i,u(i)) \in V(Y)$. We claim that any edge of $X$ contained entirely in $X'$ is satisfied by this assignment $u$. Indeed, suppose the edge $(i,j) \in E(X)$ is contained entirely in $X'$. As $s$ assigns $u(i)$ to $i$ and $u(j)$ to $j$, and is continuous over all of $X'$, there must be an edge from $s(i) = (i,u(i))$ to $s(j)=(j,u(j))$ in $Y$. But this is the same as saying that $\pi_{ij}(u(i)) = u(j)$, and thus the constraint on this edge is satisfied. Therefore, maximizing the cardinality of the number of edges over which a section of $p$ exists is the same as maximizing the cardinality of the number of constraints satisfied.

Finally, it is a folklore result that there is a bijection between assignments to two instances of $UG(k)$ on the same constraint graph $X$ whose label-extended graphs are isomorphic graph lifts. Indeed, such an isomorphism corresponds simply to re-labeling the domain of each variable. As $S_k$ is the maximal permutation group on a set of size $k$, Observation~\ref{obs:isoGcov} says that two isomorphic $G$-covering spaces of $X$ are isomorphic graph lifts.
\end{proof}

%There are a few equivalent variations of the Unique Games Conjecture that may seem stronger or weaker, but the only known UGC-complete problem that is truly different is the following:

%\begin{problem}[Max-2-Lin]\label{Max2LinProblem}
%	Let $\varepsilon, \delta > 0$, $q \in \nn$ be external parameters. Given a Unique Games instance
%	$$\mathcal{U}(G(V, E), [q], \{\pi_e : e \in E\})$$
%	in which each $\pi_e$ is a linear constraint, i.e. $\pi_e(n) = a_e + n \bmod q$ for some $a_e \in [q]$, distinguish between the following two cases:
%	\begin{align*}
%	\text{LOW Case:} && OPT(\mathcal{U}) & \leq \delta\\
%	\text{HIGH Case:} && OPT(\mathcal{U}) & \geq 1 - \varepsilon
%	\end{align*}
%\end{problem}

%We observe that this problem is, in fact, equivalent to approximating the sparsest cohomological representative.

\begin{observation}[{cf. Linial \cite{linial}}] \label{obs:Max2Lin}
Let $A$ be an Abelian group. The $\GammaMaxLin(A)$ Problem is the same as the Maximum Section of an $A$-Covering Graph Problem, where we view $A$ as a permutation group acting on itself by translations. Furthermore, given two isomorphic $A$-covering spaces of the same constraint graph $X$, there is a bijection between assignments to the two instances of UG that exactly preserves the number of constraints satisfied.	
\end{observation}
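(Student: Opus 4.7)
The plan is to follow the same two-direction template as the proof of Observation~\ref{obs:UG}, specialized to the case where every edge permutation comes from the regular (translation) action of the Abelian group $A$ on itself. Concretely, an instance of $\GammaMaxLin(A)$ on constraint graph $X$ consists of constraints $x_i - x_j = c_{ij}$ with $c_{ij} \in A$, which are exactly the $UG$ constraints $x_i = \pi_{ij}(x_j)$ where $\pi_{ij}$ is the translation $a \mapsto a + c_{ij}$. So the first step is simply to observe that the label-extended graph of such an instance, which is a graph lift of $X$ with vertex set $V(X) \times A$, is an $A$-covering graph in the sense of Definition~\ref{def:GcovGraph} with edge labels $g_{ij} = c_{ij} \in A$. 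Conversely, any $A$-covering graph of $X$ (with $A$ acting on itself by translation) determines constraints $x_i - x_j = c_{ij}$, since specifying an edge permutation by an element of $A$ acting by translation is the same as specifying a constant $c_{ij} \in A$.

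Second, I would rerun the section-to-assignment correspondence verbatim from the proof of Observation~\ref{obs:UG}. Given a partial section $s\colon X' \to Y$ of $p\colon Y \to X$ restricted to a subcomplex $X' \subseteq X$, define $u(i) \in A$ by $s(i) = (i, u(i))$, and note that continuity of $s$ over an edge $(i,j) \in X'$ forces an edge in $Y$ between $(i,u(i))$ and $(j,u(j))$, which by the definition of the $A$-covering graph is equivalent to $u(i) = u(j) + c_{ij}$, i.e., the $\GammaMaxLin$ constraint on that edge is satisfied. In the other direction, given an assignment $u$, the same formula $s(i) = (i,u(i))$ extends continuously to any edge that $u$ satisfies. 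Thus maximizing the number of edges on which a section is defined equals maximizing the number of satisfied $\GammaMaxLin(A)$ constraints.

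For the final sentence of the observation, I would invoke Observation~\ref{obs:isoGcov}/Definition~\ref{def:isoGcov}: an isomorphism of $A$-covering spaces $p_1 \colon Y_1 \to X$ and $p_2 \colon Y_2 \to X$ is given by a choice of $a_i \in A$ for each $i \in V(X)$ such that the translation constants transform by $c_{ij}^{(2)} = a_i + c_{ij}^{(1)} - a_j$. Then the bijection $u^{(1)} \leftrightarrow u^{(2)}$ defined by $u^{(2)}(i) = u^{(1)}(i) - a_i$ carries satisfying assignments to satisfying assignments edge-by-edge, since $u^{(1)}(i) - u^{(1)}(j) = c_{ij}^{(1)}$ iff $u^{(2)}(i) - u^{(2)}(j) = c_{ij}^{(2)}$. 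This gives the claimed exact preservation of the number of satisfied constraints.

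No step here should pose a real obstacle, since the argument is structurally identical to the $UG$/$S_k$ case; the only new content is the observation that $\GammaMaxLin$ constraints are precisely translation permutations, and that the coboundary-style change-of-variables formula in Observation~\ref{obs:isoGcov} specializes, for $A$ acting by translation, to additive shifts of the assignment. The one place to be careful is insisting that the isomorphism be an isomorphism of $A$-covering spaces (per Definition~\ref{def:isoGcov}), not merely of covering graphs, so that the $a_i$'s can be taken in $A$ and the bijection between assignments is well defined; this is automatic here because we have restricted to the translation action.
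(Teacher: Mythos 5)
Your proposal is correct and takes essentially the same route as the paper's proof: it observes that $\GammaMaxLin(A)$ constraints are precisely the translation action of $A$ on its own fibers, reruns the section/assignment correspondence from Observation~\ref{obs:UG}, and invokes Observation~\ref{obs:isoGcov}/Definition~\ref{def:isoGcov} to realize an $A$-covering-space isomorphism as an additive shift $x_i \mapsto x_i + a_i$ that preserves satisfied constraints. The only difference is that you spell out the change-of-variables formula for the final sentence explicitly, where the paper leaves it implicit.
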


\begin{proof}[Proof of Observation~\ref{obs:Max2Lin}]
The proof is essentially the same as in Observation~\ref{obs:UG}. However, to show that the label-extended graphs corresponding to instances of $\GammaMaxLin(A)$ have transition functions in $A$, we must note that the constraint $x_i - x_j = c$ ($c \in A$) indeed corresponds to a transition function $\pi_{ij}$ which is given by adding $c$ to the value of $x_i$. If we view the set of vertices in the label-extended graph over a given vertex $i$ as a copy of $A$, then this transition function indeed corresponds to the action of $A$ on itself by translation, so the label-extended graph is an $A$-covering graph.

We note that in this setting, an isomorphism of $A$-covering graphs of the same graph $X$ has the form $x_i \mapsto x_i + c_i$ for each $i \in V(X)$, as in Observation~\ref{obs:isoGcov} / Definition~\ref{def:isoGcov}.
\end{proof}

%In fact, the original version of the problem has a cohomological analogue as well, but we must employ non-Abelian cohomology.
%
%
%Section with maximal support

%Given a 1-cocycle with $S_n$ coefficients (defining a (vector) bundle); it determines a map from 0-cochains with $[n]$ coefficients to 1-cochains with coefficients in $\{T,F\}$, still trying to minimize the support. C^0 is a set-valued presheaf. To talk about its coboundary need to linearize, so C^1 is Z-valued.

%The KKMO reduction produces the total space of the vector bundle with coefficients in $\Z_q$.

Khot, Kindler, Mossel, and O'Donnell \cite{KKMO} prove that $\GammaMaxLin(q)$ is Unique Games-hard by giving a gap-preserving reduction from UG. Our next result is that this reduction sends isomorphic $S_k$-covering spaces to isomorphic $\ZZ{q}$-covering spaces. 

\begin{proposition} \label{prop:KKMO}
	The reduction of \cite{KKMO} from $UG(k)$ to $\GammaMaxLin(q)$ gives a well-defined map from isomorphism classes of $S_k$-covering spaces to isomorphism classes of $\ZZ{q}$-covering spaces.
\end{proposition}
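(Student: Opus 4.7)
The plan is to trace through the KKMO reduction and transport the $S_k$-covering isomorphism of the inputs to a $\ZZ{q}$-covering isomorphism of the outputs. Suppose $\mathcal{I}_1$ and $\mathcal{I}_2$ are $UG(k)$ instances on the same constraint graph $X$ whose label-extended graphs are isomorphic as $S_k$-covering spaces. By Observation~\ref{obs:isoGcov} this is exactly the data of permutations $\pi_i \in S_k$, one for each $i \in V(X)$, with $\pi_{ij}^{(2)} = \pi_i^{-1}\pi_{ij}^{(1)}\pi_j$ for every $(i,j) \in E(X)$; intuitively, the $\pi_i$ specify a relabeling of the domain of each variable that converts $\mathcal{I}_1$ into $\mathcal{I}_2$.

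Next, I would recall the structure of the KKMO output $R(\mathcal{I})$. Its $\GammaMaxLin(q)$ variables are the long-code entries $F_i(f)$ indexed by pairs $(i,f)$ with $i \in V(X)$ and $f \in \ZZ{q}^{[k]}$. For each UG edge $(i,j)$ and each sampling of the long-code test (a function $f$, a noise pattern $\mu$, and any other randomness the test uses), the reduction emits a $\GammaMaxLin(q)$ constraint of the form $F_i(f) - F_j(g) = c$, where $g$ is obtained from $f$ by composition with $\pi_{ij}$ and the addition of $\mu$, and where $c \in \ZZ{q}$ depends on the noise alone.

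The candidate isomorphism is then $\Phi\colon (i,f) \mapsto (i, f \circ \pi_i^{-1})$ on the $\GammaMaxLin$ variable set, with the composition convention chosen to match KKMO's. Using the identity $\pi_{ij}^{(2)} = \pi_i^{-1}\pi_{ij}^{(1)}\pi_j$, I would verify that $\Phi$ carries each constraint of $R(\mathcal{I}_2)$ bijectively to a constraint of $R(\mathcal{I}_1)$ with the same noise parameter, the same weight in the test distribution, and the same constant $c$. Thus $\Phi$ is an isomorphism of the two $\GammaMaxLin(q)$ instances as labeled weighted constraint systems, and by Observation~\ref{obs:Max2Lin} together with the Abelian case of Observation~\ref{obs:isoGcov} (taking trivial $\ZZ{q}$-translations at every vertex) this lifts to an isomorphism of the associated $\ZZ{q}$-covering spaces. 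Hence the KKMO reduction descends to a well-defined map on isomorphism classes.

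The hard part will be the bookkeeping: KKMO's long-code test involves several layers of randomness (the function $f$, the noise $\mu$, and in some variants a uniform random group element or folding step), and one must check that every layer transforms compatibly with the relabeling $f \mapsto f \circ \pi_i^{-1}$. Morally the whole argument is just the equivariance of the long code under coordinate permutations---a property implicit in, but not spelled out by, the KKMO soundness analysis---so the work is to extract this equivariance explicitly and confirm that none of the probabilistic or folding steps break it.
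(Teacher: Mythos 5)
Your plan is essentially the one the paper follows: you use the defining data $\pi_{ij}^{(2)} = \pi_i^{-1}\pi_{ij}^{(1)}\pi_j$ to build a vertex map $\Phi\colon (i,f)\mapsto (i, f\circ\pi_i^{-1})$, which is the same as the paper's $\varphi\colon (i,[p_1,\dots,p_k])\mapsto (i,[p_1,\dots,p_k]^{\pi_i})$. But there is a genuine gap in the middle of the argument. You assert that $\Phi$ carries each constraint to a constraint \emph{with the same constant $c$}, and accordingly take the \emph{trivial} $\ZZ{q}$-translation at every vertex to conclude the $\ZZ{q}$-covering spaces are isomorphic. That is false: after relabeling by $\Phi$, the folded constraint constants do change. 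Concretely, the edge of $W_1$ from $(i,[p_1,\dots,p_k])$ to $(j,[p_1,\dots,p_k]^{\pi_{ij}^{(1)}})$ carries the group element $p_1 - p_{(\pi_{ij}^{(1)})^{-1}(1)}$ in $Z_1$, while the $\Phi$-image edge carries $p_{\pi_i^{-1}(1)} - p_{\pi_j^{-1}(\pi_{ij}^{(1)})^{-1}(1)}$ in $Z_2$. These are not equal in general, and the discrepancy is precisely corrected by the \emph{nontrivial} $\ZZ{q}$-translation $x_{(i,[p_1,\dots,p_k])}\mapsto x_{(i,[p_1,\dots,p_k])}+p_{\pi_i^{-1}(1)}-p_1$ that the paper applies. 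Your closing paragraph rightly flags folding as the delicate step, but the committed claim that the constants match (hence trivial translations suffice) is exactly where the argument would fail; you need to compute the translation, not just observe that one exists abstractly.

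A second omission: the KKMO reduction begins by \emph{squaring} the UG instance, and the paper handles this as a separate preliminary step, verifying via a short transition-function computation ($\pi_{(i,j)_k}^{(2)} = \pi_i^{-1}\pi_{(i,j)_k}^{(1)}\pi_j$) that squaring sends isomorphic $S_k$-covering spaces to isomorphic $S_k$-covering spaces. Your proposal skips this step entirely, going directly to the long-code construction; without it, the permutations $\pi_i$ you use in defining $\Phi$ have not been shown to implement an isomorphism of the \emph{squared} instances, which is what the rest of the reduction actually consumes.
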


The reduction of \cite{KKMO} is phrased in terms of probabilistic proof checking; here we describe what happens to the underlying constraint graph. Suppose we are given a $UG(k)$ instance with constraint graph $X$ and permutations $\pi_{ij}$ ($(i,j) \in E(X)$). 

\begin{enumerate}
\item (Squaring) First, for every pair of edges incident to the same vertex we create a new edge, where the constraint permutation on the new edge is the product of the permutations on the original pair of edges. We refer to this operation as ``squaring" the instance, as it corresponds to squaring the underlying graph under one of the natural notions of graph product. 

\item (Vector bundling) Next, each vertex is replaced by $q^k$ vertices, identified by $k$-tuples of integers mod $q$, or as KKMO depicted it, coordinates in a $k$-dimensional $q$-ary hypercube. 

\item (Folding) We then identify tuples as equivalent if and only if they differ by a constant vector $(c,c,\dotsc,c) \pmod{q}$. This is a standard process in PCPs known as ``folding.'' Each equivalence class has size $q$, so when we mod out by this relation, each hypercube gets reduced in size by a factor of $q$. If the original graph had $n$ vertices, the new graph will thus have $n q^{k - 1}$ vertices, each of which may be identified uniquely by a pair $(v, [p_1, p_2, \dots, p_k])$, where $v$ is the vertex of the original graph, and $[p_1, p_2, \dots, p_k]$ is the equivalence class of the coordinates under the folding relation. 

\item (Construct constraints) Finally, given a constraint $\pi_{ij} \in S_k$ in the squared graph, we place constraints in the new graph from $(i, [p_1, p_2, \dots, p_k])$ to $(j, [p_1, p_2, \dots, p_k]^{\pi_{ij}})$, where $[\cdot]^{\pi_{ij}}$ denotes the action of $\pi_{ij}$ on $k$-tuples that permutes the coordinates. (Note that since folding mods out by constant vectors $(c,c,\dotsc,c)$, which are invariant under this permutation action, the permutation action is well-defined on the folded coordinates.) The constraint between these two vertices takes the form $x_{(i, [p_1, p_2, \dots, p_k])} - x_{(j, [p_1, p_2, \dots, p_k]^{\pi_{ij}})} = p_1 - p_{\pi_{ij}^{-1}(1)}$. As folding would add $c$ to both $p_1$ and $p_{\pi_{ij}^{-1}(1)}$, the latter value is also well-defined on folded equivalence classes.
\end{enumerate}

There are also non-trivial weights put on the constraints, but the exact values of the weights won't affect our result. For, given two isomorphic $G$-covering spaces over the same weighted graph $X$ (here the weights and the covering space structure do not interact---that is, we have a $G$-covering space over a graph $X$, and we also have weights on the edges of $X$), when we apply the standard reduction to the unweighted case we get two isomorphic $G$-covering spaces over a new unweighted graph $X'$.

It turns out that the KKMO construction is very natural from the topological perspective:

\begin{remark}[For those who know a little about bundles.]
When $q$ is prime, we may treat $\ZZ{q}$ as the field $\F_q$, and then the above construction corresponds to linearizing the $S_k$-bundle by replacing each fiber (of size $k$) by a $k$-dimensional vector space over $\F_q$, with transition maps between them being those induced by the action of $S_k$ on the coordinates of the vector space $\F_q^k$---that is, the natural linearization of the permutation action of $S_k$ on $[k]$. Folding then corresponds to recognizing that the representation of $S_k$ on these vector space fibers is a permutation representation, and therefore constant vectors form an irreducible, trivial sub-representation, and modding out by those. When $q$ is not a prime, essentially the same thing is happening but with $\ZZ{q}$-module bundles rather than vector bundles.
\end{remark}

\begin{proof}[Proof of Proposition~\ref{prop:KKMO}]
	First we show that the squaring map is a well-defined map from isomorphism classes of $S_k$-covering spaces to isomorphism classes of $S_k$-covering spaces. Given an $S_k$-covering space $p\colon Y \to X$, let $X^2$ denote the ``square'' of $X$ (which has an edge $(i,j)_k$ for each pair of edges $(i,k), (k,j) \in E(X)$), and let $Y^2$ denote the corresponding $S_k$-covering space, namely where the transition function $\pi_{(i,j)_k} = \pi_{ik} \pi_{kj}$. Then we claim that given two $S_k$-covering spaces $p_i\colon Y_i \to X$ ($i=1,2$), if they are isomorphic as covering spaces, then $Y_1^2$ and $Y_2^2$ are isomorphic as covering spaces of $X^2$. 
	
	Let $\pi_{ij}^{(\ell)}$ denote the transition function along edge $(i,j) \in E(X)$ in the covering space $Y_\ell$. As $Y_1, Y_2$ are isomorphic $S_k$-covering spaces, by Observation~\ref{obs:isoGcov} there are permutations $\pi_i$ ($i \in V(X)$) such that $\pi_{ij}^{(2)} = \pi_i^{-1} \pi_{ij}^{(1)} \pi_j$ for all $(i,j) \in E(X)$. We claim that these same $\pi_i$ define an isomorphism of $Y_1^2$ and $Y_2^2$:
	\begin{eqnarray*}
	\pi_{(i,j)_k}^{(2)} & = & \pi_{ik}^{(2)} \pi_{kj}^{(2)} \\
	& = & \pi_i^{-1} \pi_{ik}^{(1)} \pi_k^{-1} \pi_k \pi_{kj}^{(1)} \pi_j \\
	& = & \pi_i^{-1} \pi_{ik}^{(1)} \pi_{kj}^{(1)} \pi_j \\
	& = & \pi_i^{-1} \pi_{(i,j)_k}^{(1)} \pi_j.
	\end{eqnarray*}
Thus, using the second half of Observation~\ref{obs:isoGcov}, we see that $Y_1^2$ and $Y_2^2$ are isomorphic $S_k$-covering spaces.
	
	Next, we show that the remainder of the construction respects isomorphism classes of $G$-covering spaces, that is, given two isomorphic $S_k$-covering spaces, the output of steps (2)--(4) are two isomorphic $\ZZ{q}$-covering spaces. Continuing with the notation above, suppose $Y_1^2, Y_2^2$ are the label-extended graphs of the squared $UG(k)$ instances, and they are isomorphic $S_k$-covering spaces via the permutations $\pi_i$ ($i \in V(X)$). 
	
	For $\ell=1,2$, let $W_\ell$ be the constraint graph of the instance of $\GammaMaxLin(q)$ output by the KKMO construction, and let $Z_\ell$ be the label-extended graph of $W_\ell$ (that is, a $\ZZ{q}$-covering space of $W_\ell$). One issue with comparing $Z_1$ and $Z_2$ is that $W_1$ and $W_2$ need not be the same graph, even though they share the same underlying vertex set. Nonetheless, we will show that they are isomorphic graphs, and that under a suitable isomorphism $W_1 \to W_2$---induced by the isomorphism of $S_k$-covering spaces $Y_1 \to Y_2$---the $\ZZ{q}$ covering spaces $Z_1, Z_2$ then also become isomorphic. 
	
	First we define the isomorphism $\varphi\colon W_1 \to W_2$. As above, let $\pi_i$ ($i \in V(X)$) be the permutations realizing the isomorphism $Y_1 \to Y_2$ (hence, also the isomorphism $Y_1^2 \to Y_2^2$). Then we define $\varphi$ on vertices by
	\[
	\varphi((i,[p_1,p_2,\dotsc,p_k])) = (i,[p_1,\dotsc,p_k]^{\pi_i}).
	\]
	That is, $\varphi$ does not change the underlying vertex $i$ of $X$, it only moves around the vertices of $W_*$ lying over a given vertex of $X$, by scrambling the hypercube coordinates by $\pi_i$. To see that this is an isomorphism, note that $w=(i,[p_1,\dotsc,p_k])$ is adjacent in $W_1$ to $w' = (j, [p_1', \dotsc, p_k'])$ if and only if $(i,j) \in E(X)$ and $[p_1,\dotsc,p_k]^{\pi_{ij}^{(1)}} = [p_1',\dotsc,p_k']$, by construction. We may combine this with the fact that $\pi_{ij}^{(2)} = \pi_i^{-1} \pi_{ij}^{(1)} \pi_j$, to get: 
	\begin{eqnarray*}
	(w,w') \in E(W_1) & \Leftrightarrow & ((i,[p_1,\dotsc,p_k]), (j,[p_1', \dotsc, p_k'])) \in E(W_1) \\
	& \Leftrightarrow & (i,j) \in E(X) \text{ and } [p_1,\dotsc,p_k]^{\pi_{ij}^{(1)}} = [p_1',\dotsc,p_k'] \\
	& \Leftrightarrow & (i,j) \in E(X) \text{ and } [p_1,\dotsc,p_k]^{\pi_i \pi_{ij}^{(2)} \pi_j^{-1}} = [p_1',\dotsc,p_k'] \\
	& \Leftrightarrow & (i,j) \in E(X) \text{ and } \left([p_1,\dotsc,p_k]^{\pi_i}\right)^{\pi_{ij}^{(2)}} = [p_1',\dotsc,p_k']^{\pi_j} \\
	& \Leftrightarrow & ((i, [p_1,\dotsc,p_k]^{\pi_i}), (j,[p_1',\dotsc,p_k']^{\pi_j})) \in E(W_2) \\
	& \Leftrightarrow & (\varphi(w), \varphi(w')) \in E(W_2).
	\end{eqnarray*}
	Thus $\varphi\colon W_1 \to W_2$ is indeed an isomorphism. This allows us to treat $Z_2$ as a $\ZZ{q}$-covering space of $W_1$ (sic), by considering the following composition: $Z_2 \to W_2 \stackrel{\varphi^{-1}}{\to} W_1$. 
	
	Finally, we will show that $Z_1$ and $Z_2$ are isomorphic $\ZZ{q}$-covering spaces of $W_1$. To show this, let us first see what element of $\ZZ{q}$ each $Z_\ell$ assigns to each edge of $W_1$. Given an edge from $(i,[p_1,\dotsc,p_k])$ to $(j, [p_1,\dotsc,p_k]^{\pi_{ij}^{(1)}})$ in $W_1$, in $Z_1$ it is assigned the group element \[
	p_1 - p_{(\pi_{ij}^{(1)})^{-1}(1)}.
	\]
	 To see what this is assigned to by $Z_2$, first let us use $\varphi$ to turn this into an edge of $W_2$: it becomes the edge from $(i,[p_1,\dotsc,p_k]^{\pi_i})$ to $(j, [p_1,\dotsc,p_k]^{\pi_{ij}^{(1)}\pi_j})$. $Z_2$ thus assigns this edge the group element 
	 \[
	 p_{\pi_i^{-1}(1)} - p_{\pi_j^{-1} (\pi_{ij}^{(1)})^{-1}(1)}.
	 \]
If we change the variable at the vertex $(i,[p_1,\dotsc,p_k])$ by the $\ZZ{q}$ translation $c(i,[p_1,\dotsc,p_k])$ defined by:
	\[
	x_{i,[p_1,\dotsc,p_k]} \mapsto x_{i,[p_1,\dotsc,p_k]} + p_{\pi_i^{-1}(1)} - p_1,
	\]
	we will see this gives the desired isomorphism of $\ZZ{q}$-covering spaces. Consider how this changes the group element assigned to the edge from $(i,[p_1,\dotsc,p_k])$ to $(j,[p_1,\dotsc,p_k]^{\pi_{ij}^{(1)}})$. It changes from $p_1 - p_{(\pi_{ij}^{(1)})^{-1}(1)}$ to:
	\begin{eqnarray*}
	& = & p_1 - p_{(\pi_{ij}^{(1)})^{-1}(1)}  + c(i,[p_1,\dotsc,p_k]) - c(j,[p_1,\dotsc,p_k]^{\pi_{ij}^{(1)}}) \\
	& = & p_1 - p_{(\pi_{ij}^{(1)})^{-1}(1)} + (p_{\pi_i^{-1}(1)} - p_1) - (p_{\pi_j^{-1}((\pi_{ij}^{(1)})^{-1}(1))} - p_{(\pi_{ij}^{(1)})^{-1}(1)}) \\
	& = & p_{\pi_i^{-1}(1)} - p_{\pi_j^{-1} (\pi_{ij}^{(1)})^{-1}(1)}
	\end{eqnarray*}
	
	This concludes the proof.
\end{proof}

\begin{remark} \label{rmk:principal}
It is well-understood that one difference between $UG(k)$ and $\GammaMaxLin(k)$ is that if an instance of $\GammaMaxLin(k)$ is satisfiable, then one can choose an arbitrary vertex, assign an arbitrary value in $\ZZ{k}$ to this vertex, and propagate this value across the entire graph using the constraints, and this will always be a solution. In contrast, even if an instance of $UG(k)$ is satisfiable, starting from a given vertex there may be (in the worst case) only one value that can be assigned to that vertex in such a way that the propagated assignment is actually satisfying. %This corresponds to the fact that the only transitive actions of an Abelian group $A$ on a finite set are given by the action of a quotient group $A/B$ on itself by translations, whereas for non-Abelian groups this need not be the case (because non-Abelian groups can have non-normal subgroups).\footnote{Non-Abelian groups can have this property as well, viz. the quaternion group $Q$ of order $8$. However, this is essentially the only such example: groups in which all subgroups are normal are known as Dedekind groups, and non-Abelian such groups are known as Hamiltonian groups. The only finite Hamiltonian groups are $Q \times (\text{Abelian})$.} For a faithful action of an Abelian group $A$, the only choice is the action of $A$ on itself by translations. 

In topological language, the above translates nearly exactly as follows: A $G$-covering space for general $G$ is a $G$-bundle with finite fibers. When the action of $G$ is the action on itself by translations, this yields what's called a \emph{principal} $G$-bundle. When $A$ is Abelian and the action is faithful, every $A$-covering space is a principal $A$-bundle, whereas this need not be the case for non-Abelian groups. The translation of the preceding paragraph into this language is the following property of principal bundles: A principal bundle is trivial (a direct product) if and only if it has a section and if so, such a section can be found by making an arbitrary choice at one vertex and propagating.
\end{remark}

\section{1-Homology Localization on cell decompositions of 2-manifolds is UGC-complete} \label{sec:manifolds}
%\section{Approximating 1-cohomology localization on 2-manifolds is UGC-complete} \label{sec:manifolds}

\begin{theorem} \label{thm:homLoc}
1-Homology Localization on cell decompositions of closed orientable surfaces is Unique Games-complete.

More precisely, the Unique Games Conjecture holds if and only if for any $\varepsilon,\delta > 0$, there is some $k = k(\varepsilon,\delta)$ such that $\Gap\HomLoc_{1-\varepsilon,\delta}$ on cell decompositions of closed orientable surfaces with coefficients in $\ZZ{k}$ is $\cc{NP}$-hard.
\end{theorem}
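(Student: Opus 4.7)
The plan is to prove the ``if and only if'' by gap-preserving reductions in both directions, freely switching between $\HomLoc$ and $\CohLoc$ on closed 2-manifolds via Poincar\'{e} duality (Observation~\ref{obs:erickson}).

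For the easier direction (hardness of $\HomLoc$ on 2-manifolds $\Rightarrow$ UGC): starting from a $\HomLoc$ instance $(M, a)$, I dualize to obtain a $\CohLoc$ instance $(M^*, a^*)$. Let $H$ denote the 1-skeleton of $M^*$. For any 0-cochain $f$ on $M^*$, $(\delta f)(e) = f(v) - f(u)$ for $e = [u,v]$, so the support of $a^* + \delta f$ is exactly the number of edges of $H$ on which the constraint $f(u) - f(v) = a^*(e)$ fails. Thus $\CohLoc$ on $(M^*, a^*)$ is literally a $\GammaMaxLin(\ZZ{k})$ instance on $H$, which in turn is a special case of $UG(k)$ by Observation~\ref{obs:Max2Lin}. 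This is a gap-preserving reduction from $\HomLoc$ on 2-manifolds to UG, and hardness on the restricted class of dual-of-surface graphs is certainly enough to imply UG hardness in general.

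For the harder direction (UGC $\Rightarrow$ hardness of $\HomLoc$ on 2-manifolds): the plan is to reduce $\GammaMaxLin(\ZZ{k})$ (UGC-hard by \cite{KKMO}) to $\CohLoc$ on a cell decomposition of a closed orientable surface, then dualize. Given $(G, c)$ with $c \colon E(G) \to \ZZ{k}$, I will construct a cell decomposition $M^*$ of a closed orientable surface whose 1-skeleton contains $G$ plus $O(|V(G)|)$ extra ``dummy'' edges each assigned value $0$, together with an extension of $c$ to a 1-cocycle $a^*$ on $M^*$. The construction first picks a rotation system for $G$, yielding a 2-cell embedding of $G$ on some surface; the face boundaries are closed walks in $G$ along which $c$ need not sum to zero. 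To force the cocycle condition I plan to take two oppositely oriented copies of $G$ joined by matching edges of value $0$, embedded so that every face traverses each original edge once in each direction, causing all $c$-values to cancel around every 2-cell automatically. The correctness argument is then a direct translation: any 0-cochain $f^*$ on $M^*$ restricts to an assignment for $\GammaMaxLin$ on $G$ with exactly the same number of failed $G$-edge constraints, and conversely a $\GammaMaxLin$ assignment extends to $M^*$ by propagating along the dummy edges (costing nothing since they carry value $0$). The additive gap loss from the $O(|V(G)|)$ dummy-edge constraints is absorbed by Lemma~\ref{LinEdges}, at the cost of slightly tweaking $\varepsilon, \delta$ and enlarging $k$.

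The main obstacle is the 2-manifold construction itself. A closed orientable surface of genus $g$ has $\mathrm{rk}\, H^1(M^*, \ZZ{k}) = 2g$, which can be far smaller than the analogous rank for $G$ viewed as a 1-complex, so the 2-cells of $M^*$ necessarily kill many of the ``$\GammaMaxLin$ equivalence classes'' of assignments on $G$. The construction must be delicate enough that the killed classes correspond only to differences on dummy edges, leaving the $\GammaMaxLin$ optimum on $G$ intact; simultaneously the number of dummy edges must remain linear in $|V(G)|$ so that Lemma~\ref{LinEdges} preserves the inapproximability gap, and the 2-cell attachments must actually close up to a genuine orientable surface (as opposed to a non-manifold cell complex). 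Balancing these combinatorial and topological constraints simultaneously is the core technical difficulty.
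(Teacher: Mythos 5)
Your high-level outline is exactly right and matches the paper: in both directions you translate between $\HomLoc$ and $\CohLoc$ on surfaces by Poincar\'e duality (Observation~\ref{obs:erickson}), identify $\CohLoc(A)$ on a surface with $\GammaMaxLin(A)$ on its 1-skeleton subject to the cocycle condition (Observation~\ref{obs:maxLinCoho}), and plan to absorb the cost of $O(v)$ auxiliary edges via Lemma~\ref{LinEdges}. The ``easy'' direction is fine as you've stated it, and you've correctly identified the crux of the ``hard'' direction: given a $\GammaMaxLin(\ZZ{k})$ instance on a graph $G$, one must embed $G$ (plus a linear number of free edges) into a closed orientable surface so that the given 1-cochain automatically satisfies the cocycle condition on every 2-cell.

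However, there is a genuine gap: your proposed construction for the hard direction is not actually a construction. The ``two oppositely oriented copies of $G$ joined by a matching'' idea, made precise, is essentially the double of a ribbon graph: pick a rotation system for $G$, thicken to a surface with boundary, and glue on a mirror copy. The resulting surface contains $G_1 \sqcup G_2$, and the complementary regions are annuli, one per face of the original embedding of $G$; cutting each annulus with one matching edge does give disks on which the cochain sums to zero (the $G_1$ and $G_2$ contributions cancel), so the cocycle condition can be arranged. But you need the number of matching edges---equivalently, the number of faces $b$ in the chosen embedding of $G$---to be $O(v)$ for Lemma~\ref{LinEdges} to apply. By Euler's formula $b = e - v + 2 - 2g$, so $b = O(v)$ forces the genus $g$ to be within $O(v)$ of the maximum genus $\frac{1}{2}(e-v+1)$. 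Thus your doubling construction only works once you already have a (near-)maximum genus embedding. This is precisely the missing ingredient: the paper invokes Xuong's characterization (Theorem~\ref{thm:xuong}) and the Furst--Gross--McGeoch algorithm \cite{furstGrossMcGeoch} to compute a \emph{one-face} cellular embedding in polynomial time, after adding a universal vertex and $O(v)$ further edges so that the graph satisfies Xuong's even-components criterion (Proposition~\ref{prop:graphToSurf}). And once you have a one-face embedding, the doubling is superfluous: the single 2-cell's boundary walk already traverses each edge exactly twice with opposite orientations, so \emph{any} 1-cochain is automatically a 1-cocycle, giving the reduction directly with $|E(G)|+O(v)$ rather than $2|E(G)|+O(v)$ edges. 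So the essential idea you are missing is the algorithmic maximum-genus (one-face) embedding; without it, neither your construction nor any obvious variant keeps the number of auxiliary edges small enough for Lemma~\ref{LinEdges}.

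A smaller point: your framing in terms of the 2-cells ``killing $\GammaMaxLin$ equivalence classes'' and $\mathrm{rk}\, H^1(M^*,\ZZ{k}) = 2g$ is a red herring. The issue is not whether cohomology classes are killed, but whether the \emph{given} 1-cochain, extended by $0$ on the auxiliary edges, is a cocycle for the cell structure you build; the one-face embedding guarantees this unconditionally.
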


To make the equivalence here a bit cleaner, we introduce a small variant of $\GammaMaxLin$ on surfaces instead of graphs:

\begin{problem}[$\GammaMaxLin(A)$ on surfaces]
Let $A$ be an Abelian group. Given a cell decomposition of a closed surface $X$, and a 1-cocycle on $X$ with coefficients in $A$ treat the 1-cocycle as defining an instance of $\GammaMaxLin(A)$. In other words, this problem is the same as $\GammaMaxLin(A)$ on the 1-skeleton $X_1$ of $X$, except that we only consider instances of $\GammaMaxLin(A)$ in which the sum of the constraints along each cycle of $X_1$ that is the boundary of a 2-cell of $X$ is zero.
\end{problem}

From the second characterization in the definition (``in other words...''), it is clear that $\GammaMaxLin(A)$ on cell decompositions of surfaces is potentially easier than $\GammaMaxLin(A)$ on graph. We show that $\GammaMaxLin(A)$ on surfaces nonetheless remains UGC-complete.

\begin{proof}[Proof of Theorem~\ref{thm:homLoc}]
The proof proceeds as follows, and will take up the remainder of this section:
\[
\GammaMaxLin(A) \text{ on graphs} \leq \GammaMaxLin(A) \text{ on surfaces} \cong \CohLoc \text{ on surfaces} \cong \HomLoc \text{ on surfaces},
\]
where all reductions here are gap-preserving reductions.
The first reduction is the technically tricky part, embodied in Proposition~\ref{prop:graphToSurf} below. The second equivalence is Observation~\ref{obs:maxLinCoho}, which we do first since it will inform some of our subsequent discussion. The final equivalence is Observation~\ref{obs:erickson}.
\end{proof}

\begin{observation} \label{obs:maxLinCoho}
$\GammaMaxLin(A)$ on a cell decomposition of a surface $X$ is equivalent (under gap-preserving reductions) to $\CohLoc(A)$ on the same cell decomposition of the same surface.
\end{observation}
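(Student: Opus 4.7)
The plan is to observe that the two problems share the same input and their objective functions are complementary, so the reduction in both directions is essentially the identity on inputs. Concretely, let $(X, a)$ be a common input with $X$ a cell decomposition of a closed surface and $a \in Z^1(X, A)$: the $\GammaMaxLin(A)$-on-surfaces instance has one variable per vertex and a constraint $x_i - x_j = a([i,j])$ on each edge $[i,j]$, while the $\CohLoc(A)$ instance asks for the sparsest cochain cohomologous to $a$. The ``sum to zero around each 2-cell boundary'' condition in $\GammaMaxLin$-on-surfaces is precisely the cocycle condition $\delta a = 0$, since by definition $(\delta a)(\Delta) = a(\partial \Delta)$ for every 2-cell $\Delta$; hence the two formulations operate on the same data.

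Next, I would note that any assignment $x : V(X) \to A$ is naturally a $0$-cochain $f \in C^0(X, A)$, and its coboundary satisfies $(\delta f)([i,j]) = f([i]) - f([j]) = x_i - x_j$. The constraint on edge $[i,j]$ is therefore satisfied exactly when $(a - \delta f)([i,j]) = 0$, so the number of \emph{violated} constraints equals the support size of the 1-cochain $a - \delta f$. As $f$ ranges over all of $C^0(X, A)$, the cochain $a - \delta f$ ranges over the entire cohomology class $[a] \in H^1(X, A)$; conversely, every representative of $[a]$ has the form $a - \delta f$ for some $f$, by the very definition of cohomology. Thus maximizing the number of satisfied $\GammaMaxLin$ constraints is literally the same optimization as minimizing the support of a representative of $[a]$.

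Turning this bijection into a gap-preserving reduction is then routine. Writing $m$ for the number of edges, the value function satisfies $v(X, a, x) = 1 - |\operatorname{supp}(a - \delta f)|/m$, so $OPT_{\GammaMaxLin}(X, a) \geq 1 - \varepsilon$ iff $OPT_{\CohLoc}(X, a) \leq \varepsilon m$, and $OPT_{\GammaMaxLin}(X, a) \leq \delta$ iff $OPT_{\CohLoc}(X, a) \geq (1 - \delta) m$. A $\Gap\GammaMaxLin_{1-\varepsilon, \delta}$ instance thus maps to a $\Gap\CohLoc_{(1-\delta)m,\, \varepsilon m}$ instance (with inequalities flipped, since $\CohLoc$ is a minimization problem), giving a multiplicative inapproximability ratio of $(1-\delta)/\varepsilon$; the reverse direction uses the same map. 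The only thing to double-check is the matching between the cocycle condition and the surface-$\GammaMaxLin$ condition, but this is immediate from the definition of $\delta$ on 1-cochains. There is essentially no technical obstacle; the observation is a direct restatement of what it means for two 1-cochains to be cohomologous.
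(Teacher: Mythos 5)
Your proposal is correct and follows essentially the same approach as the paper: treat assignments as $0$-cochains, observe that the constraint on edge $[i,j]$ is satisfied iff $(a - \delta f)([i,j]) = 0$, and translate the satisfaction fraction $\rho$ into a support fraction $1-\rho$. The paper's proof differs only in notation (using $\alpha$ for the $0$-cochain and phrasing the gap argument in terms of fractions directly rather than normalizing by $m$).
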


\begin{proof}
Given an instance of $\GammaMaxLin(A)$ specified by constants $a_{ij} \in A$ (that is, with constraints $x_i - x_j = a_{ij}$ for all edges in the cell decomposition of $X$), by the definition of $\GammaMaxLin(A)$ on surfaces the function $a\colon (i,j) \mapsto a_{ij}$ is a 1-cocycle. 
We claim that the following is a natural bijection between assignments to the variables and cohomologous 1-cocycles such that the set of constraints satisfied by an assignment is precisely the complement of the support of the associated 1-cocycle: Given an assignment $\alpha_i$ to the variables $x_i$, treat $\alpha$ as a 0-cochain, and consider the 1-cocycle $a - \delta \alpha$. Note that $\alpha$ satisfies some edge $(i,j)$ iff $(a - \delta \alpha)(i,j) = 0$, for $(a - \delta \alpha)(i,j) = a_{ij} - \alpha_i + \alpha_j$. 

Conversely, given a cohomologous 1-cocycle $a'$, the difference $a - a'$ is the coboundary of some 0-cochain: $a - a' = \delta \alpha$; treat $\alpha$ as an assignment to the variables. Using the same equation as before, we see that the support of $a'$ is precisely the complement of the set of constraints satisfied by $\alpha$.

Thus maximizing the number of satisfied constraints is equivalent to minimizing the support of a cohomologous cocycle. All that remains to check is that the equivalence above is indeed gap-preserving, noting that $\GammaMaxLin(A)$ is a maximization problem while $\CohLoc(A)$ is a minimization problem. Here we take the value of a cocycle to be the \emph{fraction} of nonzero edges. The above shows that if the maximum fraction of satisfiable constraints in the $\GammaMaxLin(A)$ instance is $\rho$, then the minimum fraction of edges in the support of a cohomologous cocycle is $1-\rho$ (since the number of edges in the same in both instances). So if a $\geq 1-\varepsilon$ fraction of the constraints are satisfiable, then there is a cohomologous cocycle with support consisting of $\leq \varepsilon$ fraction of the edges; and if a $\leq \delta$ fraction of the constraints are satisfiable, then every cohomologous cocycle contains a $\geq 1-\delta$ fraction of the edges.
\end{proof}

Our strategy for the reduction from $\GammaMaxLin(A)$ on graphs to $\GammaMaxLin(A)$ on surfaces will be to take an arbitrary graph and embed it as the 1-skeleton of a closed surface in polynomial time, in a gap-preserving way. %In the reduction, the $\varepsilon$ and $\delta$ parameters may change, but they will still be allowed to get arbitrarily small.
The complication is that we must be careful about adding 2-cells. %While the existence of 2-cells does not change the set of feasible solutions, it does put restrictions on the kinds of instances that can occur. 
Given an instance of $\GammaMaxLin(A)$ on a graph $X$, and some cycle in $X$ such that the sum of the constraints around the cycle is nonzero, we cannot simply ``fill in" the cycle with a 2-cell. If we added such a cell to the complex, then the instance would no longer correspond to a 1-cocycle. Thus, we may only add 2-cells to cycles that are satisfiable in the given instance.

%There are many algorithms to embed a graph in a surface such that edges do not cross. With the aforementioned constraint in mind, we've found the algorithm of 
Furst, Gross, and McGeoch \cite{furstGrossMcGeoch} %most useful. They 
give a polynomial-time algorithm to find the \emph{maximal genus embedding} of a graph, that is, an embedding on an orientable closed surface such that the complement of the graph decomposes into a disjoint union of disks, in such a way as to maximize the genus of the surface. %Thus, arbitrarily adding handles in the middle of regions will not count toward maximizing the genus, as those regions would no longer be homeomorphic to disks. 
In a cellular embedding, Euler's polyhedral formula applies:
$V - E + F = 2 - 2g,$
where $V$, $E$, and $F$ are the numbers of vertices, edges, and faces of the complex, and $g$ is the genus of the surface. Holding $V$ and $E$ fixed, we see that the problem of maximizing the genus is equivalent to minimizing the number of faces---and hence minimizing the additional properties an instance of $\GammaMaxLin(A)$ on a surface must satisfy compared to being on a graph. In the extreme case, their algorithm may find an embedding with only one face, which ``wraps around" the surface touching every edge twice, once on each side. Because the orientations of the region on either side oppose each other, the boundary of this region will always be zero. Therefore, if such a region is added to the complex, \emph{any 1-cochain will still be a 1-cocycle}.

The algorithm of Furst, Gross, and McGeoch \cite{furstGrossMcGeoch} is based on the work of Xuong \cite{xuong}, who gave a complete characterization of how many regions one needs to embed a graph. We restate the special case of his main result in which only one region is needed.

\begin{theorem}[Xuong \cite{xuong}] \label{thm:xuong}
	A connected graph $G$ has a one-face cellular embedding into a closed orientable surface if and only if there exists a spanning tree $T$ such that every connected component of $G \setminus T$ has an even number of edges.
\end{theorem}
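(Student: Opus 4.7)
The plan is to work via the rotation-system framework. By the Heffter--Edmonds--Ringel principle, orientable cellular embeddings of $G$ into closed surfaces correspond bijectively (up to orientation) to \emph{rotation systems} $\rho$---a choice of cyclic ordering of the edges incident to each vertex. Given $\rho$, the faces are obtained by tracing facial walks: start at a directed edge, and at each arriving vertex take the next edge in the rotation, repeating until the walk closes. Euler's formula then pins down the genus in terms of the face count $F(\rho)$. The theorem thus reduces to the purely combinatorial problem of finding a rotation system on $G$ whose tracing produces exactly one facial walk.

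For the $(\Rightarrow)$ direction, fix a one-face embedding with rotation $\rho$ and let $W$ be its single facial walk. I would build a spanning tree $T$ greedily by scanning $W$ in order and adjoining each edge whose inclusion keeps the current subgraph acyclic; since $W$ visits every vertex and every edge (the latter twice), this produces a spanning tree. The facial walk induces a natural matching on edge-occurrences, and restricting this matching to any connected component $C$ of $G \setminus T$ pairs up the edges of $C$ within $C$, forcing $|E(C)|$ to be even.

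For the $(\Leftarrow)$ direction, fix a spanning tree $T$ satisfying the hypothesis and construct the rotation explicitly. Start from any rotation on $T$ alone, which gives a one-face planar embedding. Then add the non-tree edges back in pairs drawn from within each component of $G \setminus T$; such a pairing exists since each component is connected and has an even edge count. For each pair $(e_1,e_2)$, splice both into the rotation in a ``handle-attaching'' configuration: inserting $e_1$ alone splits the current single face into two, but then inserting $e_2$ at the correctly chosen cyclic position re-glues the two faces into one. Iterating this over all pairs maintains the one-face invariant throughout.

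The main obstacle is making the handle-attaching step rigorous. Concretely, one must show that for any pair $(e_1,e_2)$ of co-tree edges lying in a common component of $G \setminus T$, the union of $T$ with the already-inserted edges admits an extension of the rotation system placing $(e_1,e_2)$ in a one-face-preserving configuration. The connectivity of the component is essential here, since it supplies a tree path linking $e_1$ and $e_2$ along which the ``handle'' can be routed. Once this step is verified, tracking facial walks under the remaining rotation-system edits is routine case analysis, and an induction on the number of inserted pairs closes out the argument.
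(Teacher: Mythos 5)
Your proposal takes a genuinely different route from the paper. The paper's proof treats Xuong's maximum-genus formula $\gamma_M(G) = \frac{1}{2}(\beta(G) - \xi(G))$ as a black box and derives the stated equivalence by a short Euler-formula calculation (a one-face cellular embedding must be a max-genus embedding, forcing $\xi(G)=0$, and conversely). You instead attempt a first-principles argument via rotation systems, which is essentially reproving the relevant case of Xuong's theorem itself. That is a legitimate strategy, but it is much more work, and as sketched it has two substantive gaps.

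In the $(\Rightarrow)$ direction, the ``natural matching on edge-occurrences'' is never defined, and the claim that it restricts to a perfect matching \emph{within} each component of $G \setminus T$, for the particular greedily-built $T$, is exactly the nontrivial heart of Xuong's argument; it cannot be asserted. In the $(\Leftarrow)$ direction, pairing co-tree edges ``arbitrarily within components'' is not enough. After inserting $e_1=(u,v)$, the single face splits into $F_1,F_2$, with $u$ and $v$ each acquiring corners on both faces; inserting $e_2$ re-merges the faces only if $e_2$'s two endpoints have corners on \emph{different} faces. That is guaranteed when $e_1$ and $e_2$ share an endpoint, but can genuinely fail otherwise (both endpoints of $e_2$ can land on the same side of the cut, and you cannot always choose the insertion of $e_1$ to prevent this). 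So you need a partition of each component's edge set into \emph{adjacent} pairs --- pairs sharing a vertex --- and the lemma that every connected graph with an even number of edges admits such an adjacency matching is itself a nontrivial induction that your sketch omits. This is precisely the ``additional condition'' on the spanning tree that the paper alludes to when citing Lemmas~3.1 and 3.3 of Furst--Gross--McGeoch. With those two pieces supplied, the rotation-system route does go through, but as written the proposal is missing them.
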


Since this wasn't explicitly stated in Xuong \cite{xuong}, and an anonymous reviewer had a question about it, we show how it easily follows from the results as they are stated in \cite{xuong}. Lemmas 3.1 \& 3.3 of Furst, Gross, and McGeoch \cite{furstGrossMcGeoch} together show the same result, with the spanning tree satisfying an additional condition.

\begin{proof}
Xuong's main result is that the maximum genus $\gamma_M$ of $G$ is equal to $\frac{1}{2}(\beta(G) - \xi(G))$, where $\beta = E - V + 1$ is the Betti number of $G$ and 
\[
\xi(G) = \min \{ \# \text{ components of } G \backslash T \text{ with an odd number of edges} : T \text{ a spanning tree of } G\}.
\]

Suppose $G$ has a one-face cellular embedding into a closed orientable surface of genus $g$. Then, by Euler's formula, $1 - E + V = 2 - 2g$. Fixing $E,V$, the number of faces decreases iff the genus increases, so a 1-face cellular embedding must also be a max-genus embedding. Therefore $1 - E + V = 2 - 2\gamma_M(G)$, or equivalently $\gamma_M(G) = \frac{1}{2}(1 + E - V) = \frac{1}{2}\beta(G)$. Then Xuong's theorem implies $\xi(G)=0$, that is, there is a spanning tree $T$ such that every component of $G \backslash T$ has an even number of edges.

Conversely, suppose there is a spanning tree $T$ such that every component of $G \backslash T$ has an even number of edges. Then $\xi(G)=0$ and Xuong's theorem gives $\gamma_M(G) = \frac{1}{2}\beta(G)$. Using the calculation with Euler's formula from the preceding paragraph, we find that any cellular embedding of $G$ into a closed orientable surface of genus $\gamma_M(G)$ must thus have exactly one face. By Xuong's thereom, such an embedding exists.
\end{proof}

We are now ready to formally describe our reduction and prove its correctness.

\begin{proposition} \label{prop:graphToSurf}
There is a reduction of the form in Lemma~\ref{LinEdges} from $\GammaMaxLin(A)$ on graphs to $\GammaMaxLin(A)$ on cell decompositions of surfaces, and therefore $\GammaMaxLin(A)$ on surfaces is UGC-complete.
%$\GammaMaxLin(A)$ on graphs reduces to $\GammaMaxLin(A)$ on cell decompositions of surfaces, in a way that preserves the maximum ratio of satisfied clauses (up to an additive $o(1)$).
\end{proposition}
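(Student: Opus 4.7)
The plan is to reduce $\GammaMaxLin(A)$ on arbitrary graphs to $\GammaMaxLin(A)$ on cell decompositions of closed orientable surfaces by embedding the constraint graph as the 1-skeleton of a cell decomposition that uses only a \emph{single} 2-cell, so that the cocycle condition imposes no restriction on the edge labels at all. As noted in the discussion preceding the proposition, in a one-face orientable embedding the boundary walk of the unique face traverses every edge twice in opposite orientations, so its signed edge-sum is identically zero; hence every 1-cochain on the 1-skeleton is automatically a 1-cocycle, and the resulting $\GammaMaxLin(A)$-on-surfaces instance coincides with the underlying $\GammaMaxLin(A)$-on-graph instance.

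By Theorem~\ref{thm:xuong} (Xuong), a connected graph admits a one-face orientable embedding if and only if it has a spanning tree whose complement decomposes into components all of even edge-count. Since most graphs do not satisfy this, the technical step of the reduction is to cheaply modify the input to enforce Xuong's condition. Given a $\GammaMaxLin(A)$-instance on $G$, the plan is to first delete isolated vertices (which carry no constraints) and then connect the remaining at-most-$v$ components of $G$ by at most $v-1$ new edges labeled $0 \in A$, where $v$ counts non-isolated vertices. Picking any spanning tree $T$ of the resulting connected graph, I then inspect the at-most-$v$ components of $G \setminus T$ and, for each one whose edge-count is odd, add a single further new edge inside it, labeled $0 \in A$ (a parallel edge if the component consists of a single edge). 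After this procedure at most $2v-1$ edges have been added in total and the modified multigraph satisfies Xuong's condition relative to $T$.

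Finally, I apply the polynomial-time Furst--Gross--McGeoch algorithm \cite{furstGrossMcGeoch} to compute an explicit one-face cellular embedding of the modified multigraph on a closed orientable surface, and attach the single 2-cell prescribed by the embedding. The resulting $\GammaMaxLin(A)$-instance on a surface extends the original graph-instance by only $O(v)$ edges, so Lemma~\ref{LinEdges} promotes the construction to a gap-preserving reduction. Chaining with Observation~\ref{obs:maxLinCoho} (equivalence of $\GammaMaxLin(A)$ on surfaces with $\CohLoc(A)$) and Observation~\ref{obs:erickson} (Poincar\'e duality between $\CohLoc$ and $\HomLoc$ on surfaces) completes the proof of Theorem~\ref{thm:homLoc}. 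The main obstacle is precisely the verification that Xuong's parity condition can be enforced with only $O(v)$ additional edges while leaving the original labels intact; this is what lets us invoke Lemma~\ref{LinEdges}, and everything else in the plan reduces to a direct citation of an already-established result.
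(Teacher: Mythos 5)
Your proposal is correct and follows the same high-level strategy as the paper — enforce Xuong's condition cheaply so that a one-face cellular embedding exists, then invoke Furst--Gross--McGeoch and Lemma~\ref{LinEdges} — but the mechanism you use to enforce Xuong's condition is genuinely different. The paper connects the components, pads the total edge-count parity with one leaf, and then adds a \emph{universal vertex} $u$; the star spanning tree at $u$ then trivially makes $X' \setminus T$ consist of one even-edge component (plus the isolated $u$), so verifying Xuong's condition is immediate and the resulting graph stays simple. You instead fix an \emph{arbitrary} spanning tree $T$ of the connected graph and repair the parity of each odd component of $G \setminus T$ locally by inserting one extra edge inside it. This is equally valid, also adds only $O(v)$ edges, and is arguably more ``local'' in flavor; the one thing worth flagging is that when an odd component of $G \setminus T$ is a single edge, your repair must be a parallel edge, so your output is in general a multigraph. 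That is harmless here — the combinatorial CW framework of the paper explicitly allows multi-edges in the 1-skeleton, and Xuong's characterization and the Furst--Gross--McGeoch algorithm both apply to multigraphs — but it is a detail the universal-vertex trick sidesteps. The remaining chaining you describe (to Observation~\ref{obs:maxLinCoho} and Observation~\ref{obs:erickson}) is part of the proof of Theorem~\ref{thm:homLoc} rather than of this proposition, but your account of it matches the paper.
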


\begin{proof}
	Suppose we are given a 1-cocycle on a graph $X$ (every 1-cochain on a graph is a 1-cocycle, since there are no 2-cells). We describe an algorithm to transform $X$ into a graph $X'$ that admits a cellular embedding with one region. First, as $X$ is the constraint graph of a $\GammaMaxLin(A)$ instance, we may assume without loss of generality that $X$ has no isolated vertices. Now, if $X$ is disconnected, connect the components together, using one edge for each extra component.  If the total number of edges is now odd, append a leaf attached by an edge to any vertex. Finally, add a ``universal'' vertex $u$, with an edge to every other vertex. Each new edge added in the preceding steps can be labeled with any constraint; it will not matter, but say $x_i - x_j = 0$ for concreteness. Let $X'$ be the resulting graph. Letting $T$ be the star spanning tree centered at $u$---that is, consisting of precisely the edges incident on $u$---we see that $X' \setminus T$ has only one component, with an even number of edges, so $X'$ can be embedded in polynomial time with one region. The fact that this reduction preserves the inaproximability gap is the content of Lemma \ref{LinEdges}: If $\GammaMaxLin(A)$ was hard on arbitrary graphs, it will still be hard on graphs with one-face embeddings, as this reduction added at most $(\frac{v}{2} - 1) + 1 + (v + 1) = \frac{3}{2}v + 1 = O(v)$ edges. 
\end{proof}

This completes the proof of Theorem~\ref{thm:homLoc}.

We would like to further reduce the instance so that it is a simplicial complex, rather than just a cell decomposition, which would prove that the simplicial version of the problem is UGC-hard as well. However, to break a 1-region embedding into triangles seems to require so many edges that it seems impossible to preserve the inapproximability gap completely. In the next section, we manage to preserve a $7/6$ gap.

%However, with some care it is possible to preserve \emph{some} gap; in the next section we will show a bound of $\frac{7}{6}$ on the approximability of 1-Cohomology Localization for triangulations of surfaces.

With $G$-covering spaces suitably defined, we show essentially the same result for non-Abelian $G$. We use $G$-covering spaces rather than \CohLoc because \CohLoc only corresponds to principal $G$-covering spaces, in which $G$ acts on itself by translations. Using the following theorem, we could have shown UGC-completeness of Maximum Section of a $G$-Covering Space on cell decompositions of 2-manifolds without going through $\MaxLin$; we chose the above route as the concepts with Abelian coefficient groups are simpler and more well-known. For the needed non-Abelian definitions and for the proof of this next result, see Appendix~\ref{app:nonab}.

\begin{theorem} \label{thm:nonab}
Let $G$ be any group such that every product of commutators of $G$ can be written as a singular commutator. Then Maximum Section of $G$-Covering Spaces on graphs reduces to Maximum Section of $G$-Covering Spaces on cell decompositions of 2-manifolds. In particular, the latter problem for ($S_k$-)covering spaces of cell decompositions of 2-manifolds is UGC-complete.
\end{theorem}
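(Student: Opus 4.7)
My plan is to mimic the proof of Proposition~\ref{prop:graphToSurf}, but to work directly with $G$-covering spaces (using the non-Abelian framework of Appendix~\ref{app:nonab}) rather than going through \CohLoc. The essential new difficulty is the following: in the Abelian setting, the sum of edge-labels around the boundary of a one-face cellular embedding automatically vanishes, because each edge is traversed once in each direction and the contributions $+g_e$ and $-g_e$ cancel in $A$; in the non-Abelian setting, the labels do not commute and the corresponding product is some non-trivial element of the commutator subgroup. The covering space therefore does not automatically extend over the single 2-cell, and the commutator-width hypothesis on $G$ is exactly what lets us repair this by adding a constant number of new edges.

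Given a $G$-covering space $p\colon Y \to X$ on a graph, I would first apply the preprocessing from Proposition~\ref{prop:graphToSurf} (connect the components, fix the edge-parity, attach a universal vertex), assigning each of the $O(v)$ new edges the identity label $e \in G$ so as to leave the original monodromies of $Y$ unchanged. Running the Furst--Gross--McGeoch algorithm on the resulting graph $X'$ then produces, in polynomial time, a one-face cellular embedding on an orientable closed surface $\Sigma_g$. Walking around the single face gives a closed edge-walk $W$ using each edge of $X'$ exactly twice, once in each direction. Let $m \in G$ be the monodromy of $Y$ along $W$. Since each label $g_e$ appears in this product alongside $g_e^{-1}$, the image of $m$ in the abelianization $G/[G,G]$ is trivial, so $m \in [G,G]$.

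Applying the hypothesis, there exist $x, y \in G$ with $m = [x,y]$. Now attach a single handle to $\Sigma_g$ by drilling two disjoint disks in the interior of the face and gluing an annulus; to restore a cellular embedding, add two new edges $e_1, e_2$ running through the handle. This is a standard topological-graph-theory construction and yields a one-face cellular embedding of $X' \cup \{e_1,e_2\}$ on $\Sigma_{g+1}$ whose boundary word, after suitable positioning, is $W \cdot [e_1, e_2]$. Setting $g_{e_1} := y$ and $g_{e_2} := x$ on the covering-space side, the new boundary monodromy becomes $m \cdot [y,x] = m \cdot [x,y]^{-1} = m \cdot m^{-1} = e$. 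Hence the non-Abelian 1-cocycle condition (trivial monodromy around every 2-cell) is satisfied, and $Y$ extends uniquely to a $G$-covering space over the full 2-cell of $\Sigma_{g+1}$.

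For gap preservation, I would establish the straightforward non-Abelian analog of Lemma~\ref{LinEdges}: the proof there only uses that a section can be chosen on any spanning forest, so adding $O(v)$ edges shifts the satisfiable fraction by a negligible amount relative to the UG gap parameters. Combined with Observation~\ref{obs:UG} and the fact that $S_k$ has commutator width $1$ for sufficiently large $k$ (so the hypothesis is satisfied in the target case), this yields the UGC-completeness conclusion. The main obstacle I anticipate is the geometric bookkeeping: verifying that the handle attachment can be arranged to give a boundary word of precisely the form $W \cdot [e_1, e_2]$ (rather than a more intricate interleaving that would still be a product of commutators but not in the clean form above), and confirming that, under the non-Abelian $H^1$ formalism of Appendix~\ref{app:nonab}, the extension criterion for a $G$-covering space across a 2-cell is exactly triviality of the boundary monodromy.
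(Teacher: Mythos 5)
Your proposal identifies exactly the same obstruction as the paper (the monodromy of the one-face boundary walk lies in $[G,G]$) and exploits the same commutator-width hypothesis to kill it, so the core of the argument coincides. Where you diverge is in the geometric realization of the commutator-killing step. The paper stays purely graph-theoretic: it adds four new vertices and two 3-cycles at a boundary vertex $x$ (labeled so the two 3-cycles carry $h$ and $k$ on one edge each and identity elsewhere), together with edges to the universal vertex to keep Xuong's one-face criterion intact, and then argues the resulting one-face boundary walk has the original word $W$ with a commutator-shaped subword inserted. You instead attach a handle to the surface directly (drill two disks, glue an annulus) and thread two new self-loops $e_1,e_2$ through it, giving boundary word $W\,e_1 e_2 e_1^{-1} e_2^{-1}$; this is the standard ``increase genus by one'' move on a band decomposition, and it is if anything \emph{cleaner} than the paper's route because the commutator structure of the new boundary word is manifest rather than asserted. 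The concern you flag --- whether the handle attachment can be arranged so the boundary word is precisely $W\cdot[e_1,e_2]$ --- is a real point to nail down, but it is a classical fact about one-face (ribbon-graph) embeddings, and the paper's own version of this step (``we can see that what happens is that we enlarge the 2-cell by\ldots'') is no more rigorous. A few minor notes: you add two self-loops rather than new vertices, so you should say explicitly that the relevant notion of ``graph'' (the 1-skeleton of a combinatorial CW complex) allows multi-edges and loops, as the paper's does; your commutator convention differs from the paper's $[g,h]=g^{-1}h^{-1}gh$, which silently swaps $x,y$ for $h^{-1},k^{-1}$ but changes nothing substantive; and the gap-preservation step needs no ``non-Abelian analog'' of Lemma~\ref{LinEdges} at all, since that lemma is already stated for UG, which \emph{is} Maximum Section of $S_k$-Covering Spaces on graphs by Observation~\ref{obs:UG}, and your reduction adds $O(v)$ edges in the preprocessing and only $O(1)$ in the handle step.
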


Ore \cite{ore} showed that every element of the alternating group $A_k$ is a single commutator (giving the ``in particular'' of the above theorem). More generally, Ore conjectured \cite{ore} that all finite simple groups had the property that every element was a commutator. This conjecture was proved by Liebeck, O'Brien, Shalev, and Huu Tiep \cite{LOST}, which they then extended to all quasi-simple groups except for a small explicit list of exceptions \cite{LOST2} (a group is quasi-simple if $G=[G,G]$ and $G/Z(G)$ is simple; these include groups such as $\SL_n(\F_q)$). For more on commutators, Ore's Conjecture, and its generalizations, see Kappe and Morse \cite{KM}; although their paper predates \cite{LOST,LOST2}, it still gives valuable perspective. Although this already gives a significant number of groups satisfying the hypothesis of Theorem~\ref{thm:nonab}, note that we don't actually need groups where \emph{every} element is a commutator for Theorem~\ref{thm:nonab}, we only need groups $G$ such that every element of $[G,G]$ is a commutator in $G$.

\section{On a question of Chen and Freedman} \label{sec:CF}
\begin{quotation}
\noindent``...we prove [Homology Localization over $\ZZ{2}$] is $\cc{NP}$-hard to approximate for one-dimensional homology when the input is a triangularization of a 3-manifold. 
This raises the open question [of] whether localizing a one-dimensional class of a 2-manifold is $\cc{NP}$-hard to approximate...'' --Chen and Freedman \cite[p.~438]{chenFreedman}
\end{quotation}

Note that Chen and Freedman showed that $d$-Homology Localization, for any fixed $d \geq 2$ is indeed $\cc{NP}$-hard to approximate to within any constant factor for triangulations of manifolds (of unbounded dimension), as well as \HomLoc for triangulations of 3-manifolds. One might thus infer from the above quote that they were asking the same question---inapproximability to within any constant factor---for triangulations of 2-manifolds. We note that although a greedy algorithm gives a $k$-approximation to Unique Games over $\ZZ{k}$ \cite[Appendix]{SG}, when we translate this maximization problem to the minimization problem of \HomLoc, we have essentially no control over the approximation ratio. We nonetheless show some inapproximability, by a different method.

\begin{theorem} \label{thm:76}
Assuming UGC, for any $\varepsilon > 0$, there is some $k=k(\varepsilon)$ such that it is hard to approximate \HomLoc over $\Z/k\Z$ on triangulations of surfaces to within a factor of $7/6-\varepsilon$.
\end{theorem}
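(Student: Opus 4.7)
The plan is to reduce UGC-hard \CohLoc on cell decompositions (Theorem~\ref{thm:homLoc}) to \CohLoc on triangulations of 2-manifolds, and then translate back to \HomLoc via Observation~\ref{obs:erickson}. Starting from the $(1-\varepsilon,\delta)$-gap instance given by Proposition~\ref{prop:graphToSurf}, we have a 1-region cell decomposition with $E$ edges whose single 2-cell has boundary walk of length $2E$. The goal is to subdivide this 2-cell into triangles, yielding a simplicial complex $T$, while simultaneously controlling the blow-up in edge count and the cocycle values forced on the new edges.

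First I would describe an explicit polynomial-time procedure that triangulates the single 2-cell, adding some interior vertices and subdividing it into triangles in a combinatorially regular way (for instance a barycentric-style subdivision, or a fan from one interior vertex per identified copy of each boundary vertex to avoid multi-edges). Each new edge receives a cocycle value determined by the cocycle condition on its incident triangles, up to the freedom of choosing values at the new interior vertices. Second, I would analyze the minimum-support representative in each cohomology class of $T$. In the ``yes'' case, a 0-cochain on $X$ whose coboundary agrees with the given cocycle on at least a $(1-\varepsilon)$ fraction of edges can be extended to $T$ by choosing interior-vertex values; I would show that, with a good choice, the extended coboundary matches the extended cocycle on all but a small fraction of the new edges. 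In the ``no'' case, any cohomologous cocycle on $T$ restricted to $E(X)$ remains cohomologous to the original cocycle on $X$---because the coboundary of any 0-cochain supported only on new interior vertices vanishes on $E(X)$---so its support must still include at least $(1-\delta) E$ of the original edges, independent of how the new edges are labeled.

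Combining these two bounds yields an inapproximability ratio for \CohLoc on $T$ that tends, as $\varepsilon,\delta \to 0$, to a constant depending only on the triangulation; a careful choice pushes this constant past $7/6$. The main obstacle, and the reason only a $7/6$ gap rather than the full UGC gap is achieved, is precisely what the paper's preceding paragraph warns about: triangulating a $2E$-gon requires adding on the order of $E$ new edges, and many of these may be forced to carry nonzero cocycle values, limiting how sparse the extension of a ``yes''-case cocycle can be. The delicate part of the proof will therefore be balancing the triangulation's edge blow-up against the number of new edges forced to be nonzero, likely by exploiting interior-vertex coboundary freedom to cancel most forced values. Finally, invoking Observation~\ref{obs:erickson} (Poincar\'e duality) transfers the $7/6 - \varepsilon$ inapproximability from \CohLoc to \HomLoc on triangulated 2-manifolds, completing the proof.
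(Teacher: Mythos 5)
Your high-level plan—triangulate the single 2-cell from Proposition~\ref{prop:graphToSurf}, analyze completeness and soundness of the resulting instance, then pass to \HomLoc via Observation~\ref{obs:erickson}—matches the paper's strategy. But your soundness analysis has a genuine gap that would prevent the argument from working as stated. You argue that in the ``no'' case, the original $(1-\delta)|E(X')|$ unsatisfied edges must remain unsatisfied under any cohomologous cochain, since interior-vertex coboundaries vanish on $E(X')$. That is true, but it is far too weak: the triangulation blows up the number of edges by a constant factor (by a factor of $9$ in the paper's explicit construction), so knowing that $\sim|E(X')|$ old edges are nonzero only gives a lower bound of roughly $\frac{1}{9}$ of the edges of the triangulation, which is \emph{smaller} than the best upper bound achievable in the ``yes'' case. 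The essential ingredient you are missing is a lever to force \emph{new} edges of the triangulation to be nonzero in the ``no'' case. The paper's construction adds, for each copy $i$ of each boundary edge, a vertex $v_i$ joined to both endpoints of that edge, forming a triangle whose third side is the original edge $i$; by the cocycle condition, if the original edge is unsatisfied then at least one of the two new edges in that triangle must also be unsatisfied. This ``triangle propagation'' is what multiplies the count of forced-nonzero edges by roughly a factor of $3$ in the ``no'' case, and it is load-bearing in getting a ratio exceeding $1$.

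Relatedly, your completeness claim that ``all but a small fraction of the new edges'' can be made zero is too optimistic for any construction resembling the paper's. In their triangulation the new edges split into the star $S$ to a central vertex, the inner cycle $C$, and the link edges $L$, and in the ``yes'' case one can only guarantee that $C$ and $L$ are mostly zero; the $S$ edges (a constant fraction, $2/9$, of all edges) may all be forced nonzero, and the paper simply bounds $s(S) \geq 0$. Your intuition that the result should follow from ``exploiting interior-vertex coboundary freedom to cancel most forced values'' is correct in spirit for $C$ and $L$, but the real work is (a) pinning down a concrete triangulation with controlled subgraph proportions, and (b) proving the triangle-propagation bound in the ``no'' case. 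Without (b) in particular, the reduction gives no inapproximability factor greater than $1$.
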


\begin{proof}%[Proof of Theorem~\ref{thm:76}]
Let $\varepsilon_0, \delta_0 > 0$ be such that $\frac{7-5\varepsilon_0}{6 + 3\delta_0} > 7/6 - \varepsilon$. Let $k$ be sufficiently large so that $\Gap\GammaMaxLin(k)_{1-\varepsilon_0,\delta_0}$ is hard assuming UGC \cite{KKMO}. 
Given an instance of $\GammaMaxLin(k)$ on a graph, first we reduce it to an instance of $\GammaMaxLin(k)$ with the same satisfiable ratio (up to an additive $o(1)$) using Proposition~\ref{prop:graphToSurf}. Now we have a cell decomposition of a 2-manifold with just a single 2-cell. We will triangulate this 2-cell in a way that gives us the desired $7/6$ approximation ratio. 

Let $X'$ be the graph, with $v$ vertices and $e$ edges, which is embedded with one region. We will construct a graph $X''$ containing $X'$ as follows. The reader may find it useful to consult Figure~\ref{fig:simplicial} along with this description. Start with the vertex set of $X'$, and add a new vertex $v_i$ just inside of each edge $i$ around the region. Note that we are counting both times the edge appears, so $i$ will range from 0 to $2e - 1$, which we will label increasing in the clockwise direction. Add one more vertex $u$ in the very center of the region. We define three graphs with this vertex set. Let $S$ contain the star of edges from $u$ to $\{v_i : 0 \leq i < 2e\}$. Let $C$ contain the cycle of edges from $v_i$ to $v_{(i + 1 \bmod 2e)}$ for each $0 \leq i < 2e$. Let $L$ contain edges $e_{i,\ell}$ and $e_{i,r}$ adjoining each $v_i$ to both endpoints of the edge $i$, going to the left and right, respectively. We set the constraints on each $e_{i, \ell}$ to be zero, and arbitrarily choose an edge in $S$ to have a zero constraint as well. The constraint on every other edge can be deduced as the unique constraint that is satisfied if the other two edges in the triangle are satisfied. We define $X'' := X' \cup S \cup C \cup L$. Note that if we break the region into the triangles defined by $X''$, we will be left with a simplicial complex.

\begin{figure}[!htbp]
\begin{center}
\includegraphics[scale = .4]{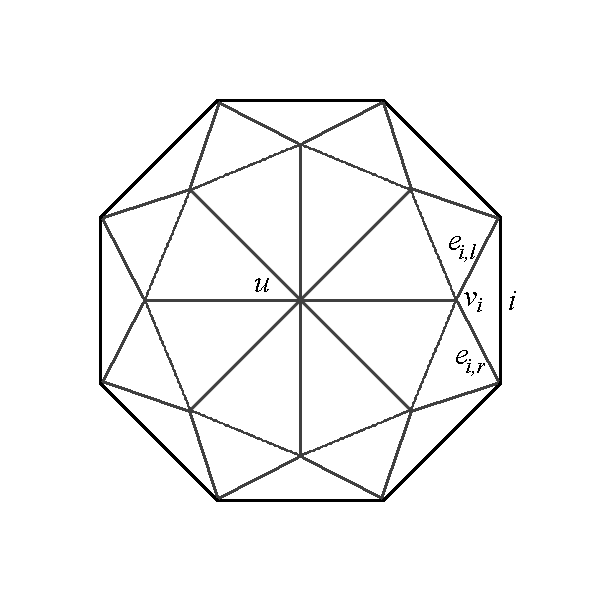}
\end{center}
\caption{Example of the triangulation construction.}
\label{fig:simplicial}
\end{figure}

It is clear by construction that the edge set of $X''$ is a disjoint union on the edge sets of $X'$, $S$, $C$, and $L$, and that the proportion of edges in each subgraph is as follows:
$$\frac{|E(X'')|}{1} = \frac{|E(S)|}{2} = \frac{|E(C)|}{2} = \frac{|E(L)|}{4} = \frac{|E(H)|}{9}.$$
Thus, to compute the satisfiability of $X''$, we can take a weighted average over the satisfiability of the subgraphs. Letting $s$ denote the fraction of satisfied edges of a subgraph in an optimal assignment to the \emph{entire} graph, we have
$$s(X'') = \frac{s(X') + 2s(S) + 2s(C) + 4s(L)}{9}.$$

In the following lemmas, we will use this formula to argue that this construction yields a $7/6$-inapproximability gap: %In the following lemmas, 
We show that if $s(X') \leq \delta_0$, then $s(X'') \leq \frac{2 + \delta_0}{3}$ and if $s(X') \geq 1-\varepsilon_0$ then $s(X'') \geq \frac{7-5\varepsilon_0}{9}$. Thus, assuming UGC it is hard to approximate \HomLoc over $\Z/k\Z$ for triangulations of 2-manifolds to within $\frac{7-5\varepsilon_0}{9} \times \frac{3}{2 + \delta_0} > 7/6 - \varepsilon$, and the proofs of the lemmas will complete the proof of the theorem.
\end{proof}

\begin{lemmafull}
With notation as above, if $s(X') \leq \delta_0$, then $s(X'') \leq \frac{2 + \delta_0}{3}$.
\end{lemmafull}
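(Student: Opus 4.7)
The plan is to bound $s(L)$ in terms of $s(X')$ using the cocycle trichotomy for triangle satisfactions, and then substitute into the identity
\[
s(X'') = \frac{s(X') + 2s(S) + 2s(C) + 4s(L)}{9}
\]
derived in the proof of Theorem~\ref{thm:76}, together with the trivial bounds $s(S), s(C) \leq 1$. It suffices to prove $s_{\alpha}(X'') \leq (s_{\alpha}(X') + 2)/3$ for every assignment $\alpha$ to $X''$, so throughout I fix such an $\alpha$ and write $s(\cdot)$ for the fraction of satisfied edges in the indicated subgraph relative to $\alpha$.

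The starting observation is that, because each triangular face contributes a cocycle relation among its three constraints, within any triangle the number of satisfied edges is $0$, $1$, or $3$---never exactly $2$, since any two satisfied equations in an abelian linear triangle force the third. I focus on the $2e$ \emph{outer} triangles, i.e., those of the form $(w_1, v_i, w_2)$ where $(w_1, w_2)$ is the $i$-th traversal of an $X'$-edge along the boundary. Each outer triangle contains exactly one $X'$-edge and two $L$-edges, and, crucially, each $L$-edge lies in exactly one outer triangle.

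By the trichotomy, an outer triangle whose $X'$-edge is unsatisfied has either $0$ or $1$ satisfied edges in total, so at most one of its two $L$-edges is satisfied. Now I double-count satisfied $L$-edge incidences in outer triangles. Since each satisfied $X'$-edge lies in two outer triangles, there are $2e \cdot s(X')$ outer triangles with a satisfied $X'$-edge (contributing at most $2$ satisfied $L$-edges each) and $2e(1 - s(X'))$ outer triangles without (contributing at most $1$ each), giving
\[
4e \cdot s(L) \;\leq\; 2 \cdot 2e\,s(X') + 1 \cdot 2e(1 - s(X')) \;=\; 2e(1 + s(X')),
\]
so $s(L) \leq \tfrac{1}{2}(1 + s(X'))$. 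Substituting into the identity above yields
\[
s(X'') \;\leq\; \tfrac{1}{9}\bigl(s(X') + 4 + 2(1 + s(X'))\bigr) \;=\; \tfrac{s(X') + 2}{3} \;\leq\; \tfrac{\delta_0 + 2}{3},
\]
as required.

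The main obstacle is realizing where to extract a nontrivial bound. Using $s(L), s(S), s(C) \leq 1$ throughout yields only $s(X'') \leq (s(X') + 8)/9$, which is too weak for the target $(s(X') + 2)/3 = (3s(X') + 6)/9$. The asymmetry of the weighted average---the $L$-edges carry the largest coefficient \emph{and} are the only new edges that share a triangle with an $X'$-edge---is exactly what allows the hypothesis $s(X') \leq \delta_0$ to propagate into a nontrivial bound. The corner and inner triangles contain no $X'$-edge, so no analogous bound on $s(S)$ or $s(C)$ is available through this argument; but the small coefficients of $s(S)$ and $s(C)$ in the weighted average mean the $L$-bound alone closes the gap.
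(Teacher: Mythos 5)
Your proof is correct and takes essentially the same approach as the paper's: both proofs exploit the fact that a triangle arising from a cocycle cannot have exactly two satisfied edges, bound $s(L)$ via the outer triangles that pair each $X'$-edge side with two $L$-edges, obtain $s(L)\leq\tfrac{1+s(X')}{2}$, and substitute into the weighted-average identity. The only cosmetic difference is that you carry the general bound $s(X'')\leq\tfrac{s(X')+2}{3}$ before specializing to $s(X')\leq\delta_0$, whereas the paper plugs in $\delta_0$ immediately.
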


\begin{proof}
	Consider the edges in $L$. For each side of each edge $i$ that is not satisfied, at least one of $e_{i,\ell}$ or $e_{i,r}$ must also be unsatisfied, for otherwise we would have only two out of three edges satisfied in a triangle, which is impossible by construction. At least a $1 - \delta_0$ fraction of these edges in $X'$ must be unsatisfied, so therefore, at least a $\frac{1 - \delta_0}{2}$ fraction in $L$ must be unsatisfied as well. Thus, $s(L) \leq 1 - \frac{1 - \delta_0}{2} = \frac{1 + \delta_0}{2}$, and so
	$$s(X'') = \frac{s(X') + 2s(S) + 2s(C) + 4s(L)}{9} \leq \frac{\delta_0 + 2 + 2 + 4\frac{1 + \delta_0}{2}}{9} = \frac{6 + 3\delta_0}{9} = \frac{2 + \delta_0}{3}.$$
\end{proof}

\begin{lemmafull}
With notation as above, if $s(X') \geq 1 - \varepsilon_0$, then $s(X'') \geq \frac{7 - 5\varepsilon_0}{9}$.
\end{lemmafull}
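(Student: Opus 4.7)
The plan is to extend a near-optimal assignment on $X'$ to an assignment on $X''$ by choosing the values at the new interior vertices canonically, lower-bound $s(X')$, $s(L)$, and $s(C)$ for the resulting assignment, combine with the trivial bound $s(S)\geq 0$, and plug everything into the weighted-average formula for $s(X'')$ displayed in the proof of the theorem. The structural fact underlying the argument---immediate from the rule that each new constraint is chosen to be the unique one making its triangle satisfiable---is that the constraints around every triangle sum to zero; equivalently, any assignment to $X''$ satisfies exactly $0$, $1$, or $3$ of the edges of each triangle (it cannot satisfy exactly $2$, because the third constraint is then forced by the other two).

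Concretely, I would fix an assignment $\sigma$ on $X'$ achieving the optimum (so at least $(1-\varepsilon_0)e$ of the $e$ edges of $X'$ are satisfied) and extend it by setting $x_{v_i}:=x_{a_i}$, where $a_i$ denotes the endpoint of boundary-edge $i$ joined to $v_i$ by $e_{i,\ell}$; the value $x_u$ is arbitrary. Since $e_{i,\ell}$ has constraint $0$, it is satisfied automatically, and then the ``no-$2$'' property applied to the $L$-triangle $\{v_i,a_i,b_i\}$ forces $e_{i,r}$ to be satisfied iff edge $i$ of $X'$ is. Therefore
\[
\#\{\text{satisfied edges in }L\}\;=\;2e+2\cdot\#\{\text{satisfied edges in }X'\}\;\geq\;(4-2\varepsilon_0)e,
\]
so that $s(L)\geq 1-\varepsilon_0/2$.

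The corner triangles $\{v_i,v_{i+1},c_i\}$---where $c_i$ is the common endpoint of boundary-edges $i$ and $i+1$---behave in exactly the same way: since $x_{v_{i+1}}=x_{a_{i+1}}=x_{c_i}$ and $e_{i+1,\ell}$ also has constraint $0$, the edge $e_{i+1,\ell}$ is satisfied, so the ``no-$2$'' property forces the $C$-edge $v_iv_{i+1}$ to be satisfied iff $e_{i,r}$ is, iff edge $i$ of $X'$ is satisfied. This yields $s(C)\geq 1-\varepsilon_0$. Combining with $s(S)\geq 0$ and $s(X')\geq 1-\varepsilon_0$ in the weighted-average identity gives
\[
s(X'')\;\geq\;\frac{s(X')+2s(S)+2s(C)+4s(L)}{9}\;\geq\;\frac{(1-\varepsilon_0)+0+2(1-\varepsilon_0)+4(1-\varepsilon_0/2)}{9}\;=\;\frac{7-5\varepsilon_0}{9},
\]
as required. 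The main subtlety is simply to notice that a single clever choice of the $v_i$-values triggers the ``no-$2$'' property in both the $L$-triangles and the corner triangles, so that the $S$-edges---whose satisfaction would otherwise require a more delicate pigeonhole analysis along the boundary---never need to be analyzed at all.
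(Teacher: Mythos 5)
Your proof is correct and follows essentially the same route as the paper: extend the near-optimal assignment on $X'$ by setting each $v_i$ to the value at the ``left'' endpoint (so all $e_{i,\ell}$ are free), observe that the triangle consistency of the constraints forces $e_{i,r}$ satisfied whenever edge $i$ is, and then that the $C$-edge $v_iv_{i+1}$ is satisfied whenever $e_{i,r}$ is, finally dropping $S$ entirely; your ``no-$2$'' formulation is simply an explicit restatement of the paper's observation that each new constraint is chosen to be the unique one completing its triangle.
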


\begin{proof}
	We describe a labeling that will be guaranteed to satisfy a $\frac{7 - 5\varepsilon_0}{9}$ fraction of edges in $X''$. First pick some labeling of $X'$ that satisfies at least $1 - \varepsilon_0$ of the edges. Assign each $v_i$ the same label as the left endpoint of edge $i$, so that the constraint on $e_{i,\ell}$ is satisfied. Choose any label for the vertex $u$. We already know that half of the edges in $L$, namely the $e_{i,\ell}$'s, will be automatically satisfied. For the other half, note that if an edge $i$ in $X'$ is satisfied, then $e_{i,r}$ must be satisfied as well, because the constraint was chosen so that it would be satisfied if both $e_{i,\ell}$ and $i$ were satisfied. Because this happens for at least a $1 - \varepsilon_0$ fraction of the edges in $X'$, we conclude that at least a $1 - \varepsilon_0$ fraction of the $e_{i,r}$ edges are satisfied, for a total of $s(L) = \frac{1 + (1 - \varepsilon_0)}{2} = \frac{2 - \varepsilon}{2}$ edges overall. Now consider the edges in $C$. For each $0 \leq i < 2e$, letting $j = i + 1 \bmod 2e$, the edge from $v_i$ to $v_j$ in $C$, which we will call $c$, forms a triangle with $e_{i,r}$ and $e_{j,\ell}$. We know that $e_{j,\ell}$ is satisfied, which means that $c$ is satisfied if and only if the third edge, $e_{i,r}$, is satisfied. But we know that at least a $1 - \varepsilon_0$ fraction of the $e_{i,r}$ edges are satisfied, so at least a $1 - \varepsilon_0$ fraction of the edges in $C$ must be satisfied as well. Putting together these bounds on the satisfiability of $X'$, $C$, and $L$, we conclude that
	$$s(X'') = \frac{s(X') + 2s(S) + 2s(C) + 4s(L)}{9} \geq \frac{(1 - \varepsilon_0) + 2(0) + 2(1 - \varepsilon_0) + 4\left(\frac{2 - \varepsilon_0}{2}\right)}{9} = \frac{7 - 5\varepsilon_0}{9}.$$
\end{proof}

Elsewhere in their paper they consider cell decompositions. If we relax their question slightly to cell decompositions rather than triangulations, and we allow the coefficient group to be $\ZZ{k}$ for arbitrary $k$ (rather than only $k=2$), then Theorem~\ref{thm:homLoc} states that their question becomes equivalent to UGC.

\section{Future directions} \label{sec:future}
Although our most technically demanding results were applying UGC to 1-Homology Localization---and, in the course of this, showing a new UGC-complete problem---we hope that the connections we have drawn between UGC and computational topology will lead to further progress on both topics in the future. Here we highlight a few specific questions suggested by our investigations.

\paragraph{Inapproximability of 1-Homology Localization for triangulations of 2-manifolds?}
Although our results partially settle a question of Chen and Freedman \cite[p.~438]{chenFreedman}, we leave open the following questions:

\begin{question}
Show (unconditionally) that there is a $c > 1$ such that it is $\cc{NP}$-hard to $c$-approximate 1-Homology Localization over $\ZZ{2}$ on triangulations of 2-manifolds.
\end{question}

\begin{question}
Does UGC imply that for all $c > 1$, it is hard to $c$-approximate 1-Homology Localization on triangulations of 2-manifolds (over $\ZZ{k}$ for $k=k(c)$)?
\end{question}

We note that our reduction from Theorem~\ref{thm:76} does not provide a strong enough gap for these problems to be immediately answered by the known $\cc{NP}$-hardness results for $\MaxLin$ \cite{hastad, AEH}. In particular, H\aa stad shows that $\Gap\MaxLin(2)_{\frac{12}{16},\frac{11}{16}}$ is $\cc{NP}$-hard (up to an additive arbitrary $\delta$ in the soundness and completeness), but if we plug these values into $\varepsilon_0, \delta_0$ from our proof, we get a ratio of $\frac{92}{129} < 1$. A quick calculation shows that, to get any $\cc{NP}$-hardness result (without UGC) from the proof of Theorem~\ref{thm:76}, we would need an inapproximability ratio for $\MaxLin$ of strictly greater than $\frac{12}{5}$, which is not provided by \cite{hastad} nor \cite{AEH}. 
%Andersson, Engebretsen, and H\aa stad \cite{AEH} showed that it was $\cc{NP}$-hard to approximate $\MaxLin(p)$ to within $(p^2 + 3p)/(p^2 + 3p -1)-\varepsilon$, but this function is maximized for $p=3$, with the ratio $18/17$, and plugging in to ours we get $7/(6 + 51/18) < 1$. 
Given the best upper bounds on approximating $\MaxLin(p)$ \cite{AEH}, which are just slightly less than $p$, such a ratio is not possible for $p=2,3$, but is possible already for $p=5$.

\paragraph{$G$-covering spaces for other families of groups and group actions.}
The viewpoint of $G$-covering spaces suggests it might be fruitful to consider instances of UG that correspond to $S_n$-covering spaces for other actions of $S_n$. For example, one might consider the action of $S_n$ on unordered $k$-tuples $\binom{[n]}{k}$, or even on $n$-vertex graphs $2^{\binom{[n]}{2}}$ (here, each variable in the UG would have the set of $n$-vertex graphs as its domain). We note that although much of the structure of any such instance is governed by the permutation constraints, the approximability properties may change significantly by varying the action. The general linear groups $\GL_n(\F_q)$, acting on the vector space $\F_q^n$, as well as Schur functors thereof, or other representations of $\GL_n(\F_q)$, strike us as leading to other possibly interesting approximation problems deserving further study. If one is looking for hard instances of Unique Games, one must construct such covering spaces so that they are not (sufficiently good) expanders \cite{AKKSTV}; see \cite{ACKM} for results on the expansion of $G$-covering spaces.

\section*{Acknowledgments}

We thank Alex Kolla for a great talk on Unique Games that inspired us in this direction, we thank Ryan O'Donnell and Hsien-Chih Chang for useful discussions, and we thank Jeff Erickson for answering some of our questions and providing references to the computational topology literature. We especially thank Thomas Church for helping us clarify some of the topological issues involved and suggesting how to modify the proof of Theorem~\ref{thm:homLoc} to extend it to the $S_n$ case (Theorem~\ref{thm:nonab}).
J. A. G. gratefully acknowledges support from NSF grants DMS-1750319 and DMS-1622390 during the course of this work. J. T.-F. gratefully acknowledges the support of his Schupf Scholarship at Amherst College.

\appendix

\section{Covering spaces and cohomology with non-Abelian coefficient groups \texorpdfstring{$G$}{G}} \label{app:nonab}

In this appendix we cover generalizations of the main notions in the paper (a) from graphs to arbitrary topological spaces (though we will focus on 2-dimensional combinatorial CW complexes) and (b) from Abelian to non-Abelian groups. This will allow us to make precise and prove Theorem~\ref{thm:nonab} (in Section~\ref{app:nonab_proof}), as well as to make precise and prove the equivalence of \CohLoc with coefficients in an arbitrary group $G$ and Maximum Section of Principal $G$-Covering Spaces (Observation~\ref{obs:maxLinCoho_nonab}).

\subsection{An introduction to \texorpdfstring{$G$-covering}{G-covering} spaces over arbitrary topological spaces}
\emph{Terminological note:} What we refer to here as ``$G$-covering spaces'' might be more typically referred to as ``covering spaces which are also $G$-bundles'' or ``$G$-bundles with finite fibers''; we have chosen the more compact nomenclature for brevity, because covering spaces may be more familiar to more of our readers, and to avoid discussing fiber bundles as much as possible. For more information, we refer the reader to Hatcher \cite{hatcher} for general algebraic topology and covering spaces, and to Husem\"{o}ller \cite{husemoller} for $G$-bundles.

The aim of this section is to recall the generalization of $G$-covering graphs to arbitrary topological spaces. 
%\paragraph{Short version, if you are familiar with graph lifts:} A $G$-covering space of a graph is what is called a $(G,S,\cdot)$-lift in \cite[Definition~12]{ACKM}, reproduced below as Definition~\ref{def:GcovGraph}. This notion generalizes in a straightforward manner to all topological spaces. 
In an effort to make this paper readable by as broad an audience as possible, we cover some standard facts from algebraic topology; we hope readers familiar with these facts will take no offense.

%Although much of our interest will be in covering spaces of graphs, for which simpler combinatorial definitions are available, our results in Section~\ref{sec:manifolds} are about covering spaces of 2-manifolds, so we give the general topological definitions here. 

A \emph{map} between two topological spaces is a continuous function.

\begin{definition}[Covering space]
Given a topological space $X$ (graph, 2-manifold, simplicial complex, etc.), a \emph{($k$-sheeted) covering space} of $X$ is a topological space $Y$ together with a map $p\colon Y \to X$ such that, for each $x \in X$, there is an open neighborhood $U$ of $x$ such that $p^{-1}(U)$ is a disjoint union of open sets $U_1, \dotsc, U_k \subseteq Y$ each of which gets mapped homeomorphically to $U$, that is, $p|_{U_i} \colon U_i \to U$ is a homeomorphism for all $i$.
\end{definition}

\begin{example}[M\"{o}bius band] \label{ex:mobius}
(This is the continuous analogue of Example~\ref{ex:mobiusDiscrete}, which the reader may wish to recall.)
Consider the map $\C \to \C$ which sends $z \in \C$ to $z^2$. Let $p$ be the restriction of this map to the unit circle $S^1 = \{e^{i \theta} : \theta \in [0,2\pi)\}$; then $p\colon S^1 \to S^1$ is a 2-sheeted covering. Topologically, this is the same as the following construction: A M\"{o}bius band $M$ has a circle $S^1$ running down its center, which we refer to as its equator. Let $p\colon M \to S^1$ be the map which sends a point on $M$ to the nearest point on the equator. Let $D$ be the boundary of the M\"{o}bius band. Then $p|_D\colon D \to S^1$ is a 2-sheeted covering space of the circle, isomorphic to that of Example~\ref{ex:mobiusDiscrete}.
\end{example}

As in the case of graphs, an \emph{isomorphism of covering spaces} $p_\ell\colon Y_\ell \to X$, ($\ell=1,2$) is a homeomorphism $f\colon Y_1 \to Y_2$ such that $p_2 \circ f = p_1$.

For general topological spaces we need one more notion. Given a covering space $p\colon Y \to X$, we say that a collection of open sets $\{U_i \subseteq X :  i \in I\}$ is a \emph{locally trivializing atlas of $X$} if: (1) the union of the $U_i$ equals $X$, and (2) each $U_i$ satisfies the hypothesis in the definition of covering space, that is $p^{-1}(U_i)$ is a disjoint union of open sets $U_{ij}$, and $p|_{U_{ij}}\colon U_{ij} \to U_i$ is a homeomorphism. 

\begin{example} [Graph lifts as covering spaces]
A covering graph of a graph is always a covering space. In any covering graph of a graph $X$, we may take the \emph{stars} of the vertices as a locally trivializing atlas. The star of a vertex $x \in V(X)$ consists of $x$ together with $3/4$ of each edge incident on $x$ (any fraction $> 1/2$ will do; we just need the stars to actually cover all of $X$).
\end{example}

Given a locally trivializing atlas $\{U_i\}$ of a $k$-sheeted covering space $p\colon Y \to X$, label each component of $p^{-1}(U_i)$ with a unique label from $[k]$. If $U_i$ and $U_j$ intersect (nontrivially), then the points of $U_i \cap U_j$ have (at least) two labels assigned to them: one from the labelling of the components of $p^{-1}(U_i)$ with $[k]$, and one from the labelling of the components of $p^{-1}(U_j)$. To each point $u \in U_i \cap U_j$, we may thus assign a permtuation $\pi_{ij}(u) \in S_k$, such that points with label $x$ in $p^{-1}(U_i)$ are the same as points with label $\pi_{ij}(u)(x)$ in $p^{-1}(U_j)$. Clearly, $\pi_{ij}(u) = (\pi_{ji}(u))^{-1}$. These $\pi_{ij}$ are called the \emph{transition functions} of the locally trivializing atlas $\{U_i\}$. 

Note that, since the set $[k]$ is discrete, on each connected component of $U_i \cap U_j$, $\pi_{ij}(u)$ must be constant, so can be represented as a single group element. Without loss of generality, we may refine any finite locally trivializing atlas to another one in which all pairwise intersections are connected. 

\begin{definition}[$G$-covering space]
Let $G$ be a group of permutations of a set of size $k$, and let $X$ be a topological space. A \emph{$G$-covering space} of $X$ is a covering space $p\colon Y \to X$ such that there exists a locally trivializing atlas $\{U_i\}$ of $X$ in which every transition function belongs to the permutation group $G$.
\end{definition}

As with $G$-covering graphs, we consider two $G$-covering spaces to be isomorphic \emph{as $G$-covering spaces} (rather than merely as covering spaces) if there exists a locally trivializing atlas such that the conditions of Definition~\ref{def:isoGcov} hold.

When $k=|G|$ and $G$ acts on itself by (left) translation, we refer to a $G$-covering space with this action as a \emph{principal} $G$-covering space. In Section~\ref{app:nonab_coho} we'll see that, with a suitable (standard) definition of non-Abelian cohomology, \CohLoc with coefficients in an arbitrary group $G$ is equivalent to Maximum Section of Principal $G$-Covering Spaces.

\subsection{Proof of Theorem~\ref{thm:nonab}} \label{app:nonab_proof}
First, recall that in a group $G$, a \emph{commutator} is an element of the form $[g,h] = g^{-1} h^{-1} g h$. The commutator subgroup is the subgroup $G'$ generated by all commutators. In general, there will be elements of $G'$ which are not themselves commutators, but are only products of at least two commutators. $G/G'$ is the maximal quotient of $G$ which is Abelian.

\begin{proof}[Proof of Theorem~\ref{thm:nonab}]
The construction begins exactly as in Proposition~\ref{prop:graphToSurf}, but we run into one additional obstacle. In the Abelian case, the fact that the new cell had each edge in its boundary exactly twice, once in each direction, meant that the sum of the group elements around the boundary was zero. But in the non-Abelian case, all it means is that the product of group elements around the boundary lies in $G'$ (we can see this because when we consider the Abelianization $G/G'$ of $G$, this product again becomes trivial, and therefore it must be an element of $G'$). 

If every element of $G'$ is in fact a commutator (a hypothesis of the theorem), then we can get around this issue as follows. Let $g$ be the product of the group elements around the boundary of the 2-cell (suitably inverted when traversing an edge backwards), starting from some vertex $x$; by assumption, we have $g=[h,k]$ for some $h,k \in G$. Add six new vertices and two new 3-cycles to $X$, say $x \to x_1 \to x_2 \to x$ and $x \to y_1 \to y_2 \to x$. Let the group element on the edges from $x$ to these new vertices be trivial, and let the group element on $(x_1, x_2)$ be $h$ and that on $(y_1, y_2)$ be $k$. Also add new edges from these new vertices to the universal vertex $u$; then as the old graph satisfied the hypothesis of Theorem~\ref{thm:xuong}, so does the new one. Moreover, we can see that what happens is that we enlarge the 2-cell by making its boundary as before, but inserting $x \to x_1 \to x_2 \to x \to y_1 \to y_2 \to x \to x_2 \to x_1 \to x \to y_2 \to y_1 \to x$ between the first and last edge of the cycle (starting at $x$). Moreover, the product around the boundary of this new cell is now trivial by construction, so we indeed have a valid $G$-covering space on a cell decomposition of a 2-manifold.

To get the ``in particular'' for $S_k$, we use a classic result of Ore \cite{ore}, that all elements of $A_k = S_k'$ are commutators.
\end{proof}

\subsection{A brief introduction to (non-Abelian) cohomology, and its equivalence with \texorpdfstring{$G$-covering}{G-covering} spaces} \label{app:nonab_coho}
Here we give a definition of cohomology with non-Abelian coefficient group $G$. This definition is essentially standard (see, e.\,g., \cite{nlab, encycNonab, breen}), but most references we could find for it were quite advanced, so we give the elementary definition here.

%\textbf{1-cohomology with coefficients in an arbitrary group.} 
Let $G$ be an arbitrary group (not necessarily Abelian). Given a (combinatorial) CW complex $X$, from now on let $X_d$ denote the set of its $d$-simplices (by which we mean the $d$-simplices used in the construction of $X$). A \emph{$d$-cochain} on $X$ with coefficients in $G$ is a function $f\colon X_d \to G$. Two $d$-cochains $f,f'\colon X_1 \to G$ may be multiplied pointwise---$(ff')(s) = f(s) f'(s)$---and it is clear that, with this multiplication, the set of $d$-cochains is a group isomorphic to $G^{|X_d|}$, denoted $C^d(X;G)$. 

A \emph{1-cocycle} on $X$ with coefficients in $G$ is a 1-cochain $f$ which furthermore satisfies the condition that, for every 2-simplex $\Delta \in X_2$, if $e_{i_1}, \dotsc, e_{i_m}$ is the walk in $X_1$ defined by the boundary of $\Delta$ (see Section~\ref{app:complexes}), then 
\[
f(e_{i_1})^{\pm 1}f(e_{i_2})^{\pm 1} \dotsb f(e_{i_m})^{\pm 1} = 1,
\] 
where the exponent on $f(e_{i_j})$ is positive if $e_{i_j}$ is traversed in the direction agreeing with the orientation of that edge, and negative otherwise. The set of 1-cocycles is denoted $Z^1(X ; G)$. If $G$ is Abelian, then again pointwise product makes $Z^1(X ; G)$ into a group, but in the non-Abelian case this will not work in general, and $Z^1(X; G)$ will be treated simply as a set.

The group of 0-cochains $C^0(X ; G)$ acts on $C^1(X ; G)$ in a way that sends 1-cocycles to 1-cocycles, which we'll define next. The key idea is that two 1-cocycles are \emph{cohomologous} if they are in the same orbit under this action. The orbits of this action form the 1-cohomology set $H^1(X ; G)$. When $G$ is Abelian, one can map $C^0(X ; G)$ into a subgroup of $C^1(X ; G)$ (the ``1-coboundaries'') such that the action is simply the group multiplication, and then $H^1(X ; G)$ becomes the quotient group, but when $G$ is not Abelian we are stuck with actions and quotient sets. For related reasons, when $G$ is non-Abelian, only the cohomology groups $H^0$ and $H^1$ really make sense.

The action of $C^0$ on $C^1$ is defined as follows: Given $c \in C^0(X ; G)$, $f \in C^1(X ; G)$, and $e \in X_1$ with endpoints $x,y \in X_0$, such that $e$ is directed from $x$ to $y$, we define
\[
(f^c)(e) = c(x)f(e)c(y)^{-1}.
\]
Note that, in the condition for a 1-cochian to be a cocycle, the action of $c$ does not change the product $f(e_{i_1})^{\pm 1} f(e_{i_2})^{\pm 1} \dotsb f(e_{i_m})^{\pm 1}$, since in between each factor gets inserted a term of the form $c(x)^{-1} c(x)$. Thus $C^0(X ; G)$ also acts on $Z^1(X ; G)$, and the quotient set $H^1(X ; G)$ is well-defined.

%Since we will only have use for 1-(co)homology in this paper---and that is the clearest case in the non-Abelian setting---we will focus on that, which also allows us to restrict our attention to two-dimensional complexes.

%\subsection{1-Cohomology Localization is equivalent to Maximum Section of Principal $G$-Covering Spaces for arbitrary $G$} \label{app:nonab_equivalence}

Finally, we come to the non-Abelian analogue of the equivalence between $\CohLoc(A)$ and $\GammaMaxLin(A)$ (Observation~\ref{obs:maxLinCoho}). Now that we have all the definitions in place for non-Abelian $G$, the only additional ingredient in the proof is the key idea behind Theorem~F1, %thm:H1cov, 
which we recall now. The basic idea is that, given an open covering $\{U_i\}$ of $X$ and functions $\pi_{ij}\colon U_i \cap U_j \to G$, the necessary and sufficient condition for the $\pi_{ij}$ to be the transition functions of a (principal) $G$-covering space of $X$ is that, whenever there is a triple-overlap $U_i \cap U_j \cap U_k \neq \emptyset$, then 
\begin{equation} \label{eq:coho}
\pi_{ij}(u) \pi_{jk}(u) = \pi_{ik}(u)
\end{equation}
 for all $u \in U_i \cap U_j \cap U_k$. To see what this has to do with cohomology, suppose $X$ is a simplicial complex with the $U_i$ centered at the vertices, and such that $U_i \cap U_j$ is nonempty iff $[i,j]$ is an edge of $X$ and $U_i \cap U_j \cap U_k$ is nonempty iff $[i,j,k]$ is a triangle in $X$. Then $\pi_{ij}$ defines a continuous function on the edge $[i,j]$, and if $G$ is a finite group then this continuous function must be constant, so $\pi_{ij}$ assigns a group element $g_{ij}$ to each edge $[i,j]$. Then the condition (\ref{eq:coho}) is precisely the condition for these group elements form a 1-cocycle. The action of $C^0(X; G)$ on $Z^1(X; G)$ is then given by relabeling the sheets over a given vertex $i \in X_0$, and this action gives (essentially by definition) isomorphisms of $G$-covering spaces (see Observation~\ref{obs:isoGcov} and Definition~\ref{def:isoGcov}). Thus isomorphism classes of principal $G$-covering spaces are in bijective correspondence with the elements of $H^1(X; G)$.

%that there is a natural bijective correspondence between $H^1(X ; G)$ and isomorphism classes of principal $G$-covering spaces of $X$.

\begin{observation} \label{obs:maxLinCoho_nonab}
Let $G$ be an arbitrary finite group. 1-Cohomology Localization on a space $X$ with coefficients in $G$ is equivalent to Maximum Section of Principal $G$-Covering Spaces on $X$. Indeed, they are essentially the same problem.
\end{observation}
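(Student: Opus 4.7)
The proof will be the natural non-Abelian analogue of Observation~\ref{obs:maxLinCoho}, leveraging the correspondence sketched just before the observation: isomorphism classes of principal $G$-covering spaces of $X$ are in bijection with $H^1(X; G)$, and a specific 1-cocycle $f \in Z^1(X; G)$ realizes a specific principal $G$-covering $p_f\colon Y_f \to X$ by using $f$ as the transition data (with each fiber identified with $G$ acting on itself by left translation).

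First I would set up the bijection between 0-cochains and attempted sections. Given $c \in C^0(X; G)$, interpret $c(i) \in G$ as the choice of sheet over vertex $i \in X_0$ via the regular action identification of each fiber with $G$. This is only an attempted section: it extends continuously across a 1-cell $e$ with endpoints $x,y$ (directed from $x$ to $y$) exactly when the choice at $y$ matches what is forced by the transition function from the choice at $x$. Unwinding conventions, this extension condition is $c(x)f(e)c(y)^{-1}=1$, which is precisely the condition that $(f^c)(e)=1$ in the cocycle action defined in Section~\ref{app:nonab_coho}. Any partial section of $p_f$ can be extended arbitrarily over any vertex on which it was not yet defined to yield a 0-cochain, so (up to the trivially handled case of isolated vertices) maximum partial sections correspond bijectively to 0-cochains $c$ that maximize the number of edges on which $(f^c)(e)=1$.

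Thus the number of 1-cells covered by a maximum partial section of $p_f$ equals $|X_1|$ minus the minimum support size of a cocycle in the cohomology class of $f$. Since the correspondence $f \leftrightarrow p_f$ is well defined up to isomorphism on both sides, and $f^c$ ranges over the entire cohomology class of $f$ as $c$ ranges over $C^0(X; G)$, this yields a value-preserving identification between Maximum Section of Principal $G$-Covering Spaces and 1-Cohomology Localization with coefficients in $G$---translating between a maximization and a minimization problem on the same edge set exactly as in the Abelian case, hence gap-preserving in the same sense.

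The main obstacle is purely bookkeeping: one must be careful about edge orientations (a 1-cocycle condition uses directed edges, but the notion of a section is orientation-insensitive) and verify that the non-Abelian cocycle action $(f^c)(e) = c(x)f(e)c(y)^{-1}$ is exactly what produces the sheet-matching condition across each edge. Neither issue introduces substantive difficulty; they merely require that we consistently use the conventions laid down in Section~\ref{app:nonab_coho} and in the sketch preceding Theorem~\ref{thm:H1cov}.
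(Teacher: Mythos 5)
Your proposal is correct and follows essentially the same route as the paper's proof: both set up the bijection between 0-cochains $s\colon X_0\to G$ (read as sheet choices over the vertices via the regular-action identification of fibers with $G$) and 1-cocycles cohomologous to the given one, observe that $s$ extends across an edge $(i,j)$ iff $(g^s)_{ij}=s_i g_{ij} s_j^{-1}=1$, and conclude that maximizing extendible edges is minimizing the support of a cohomologous cocycle. The only cosmetic difference is that you flag the edge-orientation bookkeeping and the isolated-vertex case explicitly, which the paper leaves implicit.
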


\begin{proof}
Given an instance of Maximum Section of a Principal $G$-Covering Space specified by constants $g_{ij} \in G$ (that is, with constraints $x_i x_j^{-1} = g_{ij}$ for all edges in the cell decomposition of $X$), from the above discussion we see that the $g_{ij}$ define a 1-cocycle on $X$. Let $p\colon Y \to X$ be the corresponding $G$-covering space. As $Y$ is a principal $G$-covering space by definition, we may identify $p^{-1}(x)$ with $G$, for each vertex $x \in X_0$. 

We claim that the following is a natural bijection between assignments $s\colon X_0 \to G$ and 1-cocycles cohomologous to $\{g_{ij}\}$ such that the set of edges to which $s$ can be extended continuously is precisely the complement of the support of the associated 1-cocycle: Given a group element $s_i \in G$ assigned to the vertex $i$, treat $s = \{s_i\}$ as a 0-cochain, and consider the 1-cocycle $g^s$, defined as above by $(g^s)_{ij} = s_i g_{ij} s_j^{-1}$. Note that $s$ can be continuously extended to include an edge $(i,j)$ in its domain if and only if $s_i g_{ij} = s_j$, which is equivalent to $(g^s)_{ij}=1$. 

Conversely, given a cohomologous 1-cocycle $g'$, there is some 0-cochain $s$ such that $g' = g^s$. Treat $s$ as an assignment of a group element $s_i \in G$ to each vertex $i$. Using the same equation as before, we see that the support of $g'$ is precisely the complement of the set of edges to which $s$ can be extended continuously. 

Thus maximizing the number edges in a subset which admits a section is equivalent to minimizing the support of a cohomologous cocycle.
\end{proof}

\bibliographystyle{alphaurl}
\bibliography{UGAT}

\end{document}